\numberwithin{equation}{section}
\def \al{\alpha}
\def \be{\beta}
\def \de{\delta}
\def \er{\varepsilon}
\def \ze{\zeta}
\def \la{\lambda}
\def \te{\theta}
\def \ph{\varphi}
\def \oo{\omega}
\renewcommand{\l}{\left}
\renewcommand{\r}{\right}
\def \L{\Lambda}
\def \O{\Omega}
\def \C{\mathbb{C}}
\def \R{\mathbb{R}}
\def\n{\nabla}
\def\dd{\partial}
\def\div{\operatorname{div}}
\def\rot{\operatorname{rot}}
\def\1{1\!\!\!\!1}
\def\det{\operatorname{det}}
\def\dom{\operatorname{Dom}}
\def\supp{\operatorname{supp}}
\def\tr{\operatorname{tr}}
\def\re{\operatorname{Re}}
\newcommand{\<}{\langle}
\renewcommand{\>}{\rangle}
\theoremstyle{plain}
\newtheorem{theorem}{\bf Theorem}[section]
\newtheorem{lemma}[theorem]{\bf Lemma}
\theoremstyle{definition}
\newtheorem{defi}[theorem]{\bf Definition}
\theoremstyle{remark}
\newtheorem{rem}[theorem]{\bf Remark}
\renewcommand{\le}{\leqslant}
\renewcommand{\ge}{\geqslant}
\renewcommand{\qed}{\vrule height7pt width5pt depth0pt}
\title{Regularity of electromagnetic fields in convex domains}
\author{N.~Filonov, A.~Prokhorov
\thanks{The first author was supported by grant RFBR 14-01-00306, 
the second author was supported by grant NIR SPbSU 0.38.237.2014.}
}
\begin{document}
\maketitle

\begin{abstract}
In this paper the "strong"\ Maxwell operator 
defined on fields from the Sobolev space $W_2^1$, 
and the "weak"\ Maxwell operator defined on the natural domain are considered. 
It is shown that in a convex domain, and, more generally, in a domain,
which is locally $(W_3^2 \cap W^1_\infty)$-diffeomorphic to convex one,
the "strong"\ and the "weak"\ Maxwell operators coincide.
\footnote{Key words and phrases: domain of the Maxwell operator, 
convex domains, external ball condition.}
\end{abstract}
\section*{Introduction}
\subsection{Functional spaces}
Let $\O$ be a bounded domain in $\R^3$ and let $\er$ and $\mu$ be  measurable
$(3\times 3)$-matrix-valued functions in $\O$.
The functions $\er$ and $\mu$ describe the electric permittivity 
and the magnetic permeability of the medium filling the domain. 
We assume that they are real, positively definite and bounded:
\begin{equation}
\label{01}
\er(x) = \overline{\er(x)}, \quad 
\mu(x) = \overline{\mu(x)}, \quad 
0 < \er_0 \1 \le \er(x) \le \er_1 \1\,, \quad 
0 < \mu_0 \1 \le \mu(x) \le \mu_1 \1\, .
\end{equation}
The Hilbert spaces
$$
F (\O, s) = \{ u \in L_2 (\O, \C^3) :\rot u, \div (su) \in L_2\},
\quad s = \er \text{ or } \mu,
$$
endowed with the norm
$$
\|u\|_{F(\O,s)}^2 = \|\rot u\|_{L_2}^2 + \|\div(su)\|_{L_2}^2 + (su,u)_{L_2}
$$
are natural settings in studying electromagnetic waves.
We distinguish the subspaces of functions satisfying boundary conditions 
of perfect conductivity
\begin{equation*}
F (\O, \er, \tau) = \{E \in F(\O,\er) : \left.E_\tau \right|_{\dd\O} = 0 \}, 
\end{equation*}
\begin{equation*}
F (\O, \mu, \nu) = \{H \in F(\O,\mu) : \left.(\mu H)_\nu \right|_{\dd\O} = 0 \}.
\end{equation*}
Here $\tau$ and $\nu$ mean, respectively, the tangent and the normal components 
of a vector on the boundary $\dd\O$; conditions 
\begin{equation}
\label{014}
\left.E_\tau \right|_{\dd\O} = 0 \quad \text{and} \quad
\left.(\mu H)_\nu \right|_{\dd\O} = 0
\end{equation} 
are understood in the sense of integral identities.
\begin{defi}
\label{o11}
Let $w\in L_2 (\O, \C^3)$, $\rot w\in L_2 (\O, \C^3)$.
The equality 
$\left. w_\tau\right|_{\dd\O} = 0$ means that
$$
\int_\O \<w, \rot h\> dx = \int_\O \<\rot w, h\> dx \quad 
\forall \ h \in L_2(\O, \C^3) : \rot h \in L_2 (\O, \C^3) .
$$
\end{defi}

Here $\<\,.\,,\,.\,\>$ denotes the standard scalar product in $\C^3$.

\begin{defi}
\label{o12}
Let $w\in L_2 (\O, \C^3)$, $\div w\in L_2 (\O)$.
The equality 
$\left. w_\nu\right|_{\dd\O} = 0$ means that
$$ 
\int_\O \<w, \n \ph\> dx = - \int_\O \div w\,\overline{\ph}\, dx \quad 
\forall \ \ph \in W_2^1(\O) .
$$
\end{defi}

\begin{rem}
If $\dd\O$ is Lipschitz and $w \in W_2^1(\O,\C^3)$, then
these definitions of equalities 
$\left. w_\tau\right|_{\dd\O} = 0$ and 
$\left. w_\nu\right|_{\dd\O} = 0$ 
are equivalent to the definitions in the sense of traces.
\end{rem}
For
\begin{equation}
\label{04}
\er,\mu\in{W_3^1 (\O)},
\end{equation}
we introduce the subspaces of Sobolev space with appropriate boundary conditions:
\begin{eqnarray*}
W_2^1 (\O, \tau) = 
\{u \in W_2^1 (\O, \C^3) : \left.u_\tau \right|_{\dd\O} = 0 \}, \\ 
W_2^1 (\O, \mu, \nu) = \{v \in W_2^1 (\O, \C^3) : 
\left.(\mu v)_\nu \right|_{\dd\O} = 0 \}.
\end{eqnarray*}
Assumptions \eqref{01} and \eqref{04} for domains with Lipschitz boundary
ensure the implication
$$
u \in W_2^1 (\O) \quad \Rightarrow \quad s u \in W_2^1 (\O) ,
$$ 
therefore,
$$
W_2^1 (\O, \tau) \subset F (\O, \er, \tau), \quad 
W_2^1 (\O, \mu, \nu) \subset F (\O, \mu, \nu) .
$$
Our goal is to prove the coincidence of these spaces
\begin{equation}
\label{05}
F (\O, \er, \tau) = W_2^1 (\O, \tau), \quad F (\O, \mu, \nu) = W_2^1 (\O, \mu, \nu) 
\end{equation}
in domains locally $(W^2_3\cap W^1_\infty)$-diffeomorphic to convex domains 
(see Definition \ref{o03} below).
\subsection{Maxwell operator} 
Let us distinguish the subspace
$$
J=\{ E \in L_2 (\O, \C^3) : \div (\er E) = 0 \} \oplus 
\{ H \in L_2 (\O, \C^3) : \div (\mu H) = 0 , \left.(\mu H)_\nu\right|_{\dd\O} = 0\}
$$
in the space
$L_2 (\O, \C^6; \er, \mu)$ endowed with the norm
$$
\l\|\left( \begin{array}{cc} E \\ 
H \end{array} \right)\r\|_{L_2 (\O, \C^6; \er, \mu)}^2 = 
\int_\O \l( \<\er E, E\> + \<\mu H, H\> \r) \, dx .
$$
The Maxwell operator acts on the subspace $J$ by the formula
\begin{equation}
\label{06}
{\cal M} 
\left( \begin{array}{cc} E \\ H \end{array} \right) =
\left( \begin{array}{cc}
i \er^{-1} \rot H \\ -i \mu^{-1} \rot E
\end{array} \right) 
\end{equation}
on the domain 
$$
\dom {\cal M} = 
\{E \in F(\O, \er, \tau) : \div (\er E) = 0\} \oplus 
\{H \in F(\O, \mu, \nu) : \div (\mu H) = 0 \}.
$$
Here $E$ and $H$ are electric and magnetic components of field,
subject to the divergence free condition
\begin{equation}
\label{07} 
\div (\er E) = \div (\mu H) = 0
\end{equation}
and to the boundary conditions of perfect conductivity \eqref{014}.
It is easy to show (see \cite{BS89aa}), 
that the Maxwell operator is self-adjoint, ${\cal M} = {\cal M}^*$.
One introduces also the "strong"\ Maxwell operator ${\cal M}_s$,
determined by the same expression \eqref{06} on the domain 
$$
\dom {\cal M}_s = 
\{E \in W_2^1 (\O, \tau) : \div (\er E) = 0\} \oplus 
\{H \in W_2^1 (\O, \mu, \nu) : \div (\mu H) = 0 \}.
$$
It is well known that the "strong"\ Maxwell operator does not 
always coincide with the "weak" one:
if the domain $\O$ is a polyhedron with an incoming edge, $\er = \mu = \1$\,,
then the "strong"\ operator ${\cal M}_s$ is symmetric, 
but not self-adjoint, and has infinite deficiency indices.

One consider the extended Maxwell operator ${\cal L}$ 
besides the operators ${\cal M}$ and ${\cal M}_s$.
It acts on the space 
$L_2 (\O, \C^8; \er, \mu)$ with the norm
$$
\l\|X\r\|_{L_2 (\O, \C^8; \er, \mu)}^2 = 
\int_\O \l( \<\er E, E\> + |\ph|^2 + \<\mu H, H\> + |\eta|^2\r) \, dx, 
\quad X = \left( \begin{array}{cc} E \\ \ph \\ H \\ \eta\end{array} \right).
$$
The operator ${\cal L}$ is determined by the formulas
\begin{equation}
\label{08}
{\cal L} = 
\left( \begin{array}{cccc}
0 & 0 & i \er^{-1} \rot & i\n \\
0 & 0 & - i \div (\mu \cdot) & 0 \\
-i \mu^{-1} \rot & -i\n & 0 & 0 \\
i \div (\er \cdot) & 0 & 0 & 0
\end{array} \right),
\end{equation}
$$
{\cal L} 
\left( \begin{array}{cc} E \\ \ph \\ H \\ \eta \end{array} \right) =
\left( \begin{array}{cc}
i \er^{-1} \rot H + i \n\eta \\ -i \div (\mu H) \\
-i \mu^{-1} \rot E - i \n \ph \\ i \div (\er E)
\end{array} \right) 
$$
on the domain 
$$
\dom {\cal L} = F(\O, \er, \tau) \oplus W_2^1 (\O, \C) 
\oplus F (\O, \mu, \nu) \oplus \mathring W_2^1 (\O, \C) .
$$
It is easy to see that the operator ${\cal L}$ is also self-adjoint, 
${\cal L} = {\cal L}^*$.

The subspace 
$$
\{ E \in L_2 (\O, \C^3) : \div (\er E) = 0 \} \oplus \{ 0 \} \oplus 
\{ H \in L_2 (\O, \C^3) : \div (\mu H) = 0 , \left.(\mu H)_\nu \right|_{\dd\O} = 0\}
\oplus \{ 0 \} 
$$
reduces the operator ${\cal L}$, 
and the restriction of the operator ${\cal L}$ on this subspace is unitarily equi\-va\-lent to the operator ${\cal M}$. 
The functions from $\dom {\cal L}$ admit a multiplication 
by smooth cut-off functions, 
but divergence free conditions \eqref{07} break under such a multiplication,
so it is more convenient to work with the operator ${\cal L}$.

The "strong"\ operator ${\cal L}_s$ is determined by the expression \eqref{08} 
on the domain 
$$
\dom {\cal L}_s = W_2^1 (\O, \tau) \oplus W_2^1 (\O, \C) 
\oplus W_2^1 (\O, \mu, \nu) \oplus \mathring W_2^1 (\O, \C) .
$$
The operator ${\cal L}_s$, as opposed to ${\cal M}_s$, is elliptic.
We will show (see Theorem \ref{t11}  below) that under assumptions 
\eqref{01} and  \eqref{04} in domains 
locally  $(W^2_3\cap W^1_\infty)$-diffeomorphic to convex domains,  
the equality ${\cal L} = {\cal L}_s$ (and hence, ${\cal M} = {\cal M}_s$) 
holds, as well as the equalities \eqref{05} .

\subsection{Domains}
In this subsection we describe the classes of the domains under consideration.

\begin{defi}
A domain $\L \subset \R^n$ is called special Lipschitz domain, if 
$$
\L = \{ (x', x_n) \in \R^n : x_n > \phi (x') \} ,
$$
where $\phi : \R^{n-1} \to \R$ is a Lipschitz function,
$$
|\phi (x') - \phi (y')| \le K |x'-y'| \quad \forall\ x', y' \in \R^{n-1} ,
$$
 and $K$ is the Lipschitz constant of the domain $\L$.
\end{defi}

\begin{defi}
\label{o02}
If for every point $x \in\dd\O$ there exists a neighborhood  
$D$ of the point $x$ and a special Lipschitz domain $\Lambda$, 
such that $D\cap\O = D\cap \Lambda$, 
then a bounded domain $\O \subset \R^n$ 
is called domain with Lipschitz boundary. 
\end{defi}

\begin{defi}
\label{o03}
Let $X(D,\R^n)$ be a space of functions defined in a domain $D\subset \R^n$. 
We say that the bounded domain $\O \subset \R^n$ 
belongs to the class $\mathcal{C}(X)$, if for every point 
$x \in\dd\O$ there exists a neighborhood $U$ of the point $x$, a bijection 
$$
\psi:U\to \tilde{U},\quad \psi\in X(U),\quad \psi^{-1}\in X(\tilde{U}),
$$
and a special Lipschitz domain $V$, 
such that $\psi(U\cap \O)= \tilde{U}\cap V$,
and the set $\tilde{U}\cap V$ is {\it convex}. 
\end{defi}

It is clear that the convex domains and 
the domains with boundary of class $X$ belong to $\mathcal{C}(X)$. 

Further, consider $\O \subset \R^n$, $x \in \dd\O$. 
Denote the open ball of radius $R$ centered at the point $x$ by $B_R(x)$.
Denote
$$
R(x) = \sup \{ R : \exists\, z \in \R^n
\text{ such that } |x-z| = R,\ \  B_R (z)\cap \O = \emptyset \} 
$$
to be the radius of the biggest ball, 
which can touch the point $x$ from outside the domain $\O$,
if such balls exist, and set $R(x) = 0$ if such balls do not exist.

\begin{defi}
\label{o04}
A domain $\O \subset \R^n$ is said to satisfy external ball condition 
(EBC) if 
$$
R(\O) : = \inf _{x\in\dd\O} R(x) > 0 .
$$
\end{defi}

Examples:
\begin{itemize}
\item Every convex domain $\O$ satisfies EBC; 
moreover, $R(\O)= +\infty$.
\item Every bounded domain with $C^2$-smooth boundary satisfies EBC.
\item Corner on the plane, described in the polar coordinates by the formula\\
$\O = \{ (\rho, \te) : \te \in (\pi/2, 2\pi) \}$ 
does not satisfy EBC, because $R(0) = 0$.
\end{itemize}
It turns out that external ball condition 
can be described in terms of diffeomorphisms.

\begin{theorem}
\label{t01}
For bounded domains with Lipschitz boundary, EBC is equivalent 
to belonging to the class $\mathcal{C}(C^2)$.
\end{theorem}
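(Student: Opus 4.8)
The plan is to prove both implications by reducing, through the local Lipschitz graph representation and a compactness argument over the compact boundary $\dd\O$, to a single analytic characterization: a Lipschitz graph domain satisfies a uniform exterior ball condition if and only if its defining function $\phi$ is uniformly semiconvex, i.e. $\phi(x') + \tfrac{C}{2}|x'|^2$ is convex for a constant $C$ controlled by $R(\O)$ and the Lipschitz constant $K$. First I would fix a boundary point and rotate coordinates so that $\O$ coincides, in a neighborhood of the point, with the epigraph $\{x_n > \phi(x')\}$ of a $K$-Lipschitz function $\phi$, noting that the complement is then the hypograph and that exterior balls are balls lying below the graph.

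For the implication $\mathrm{EBC} \Rightarrow \mathcal{C}(C^2)$, the key step is to deduce semiconvexity of $\phi$ from the exterior balls. An exterior ball $B_R(z)$ of radius $R = R(\O)$ disjoint from $\O$ and tangent to $\dd\O$ at the point $(y',\phi(y'))$ lies in the hypograph, so its lower spherical cap is a lower bound for $\phi$ near $y'$; since such a supporting cap, uniformly $C^2$ with curvature bounded below by $-1/R$, touches the graph at every base point $y'$, the function $\phi$ is semiconvex with a constant $C=C(R,K)$. Once this is established, the explicit shear $\psi(x',x_n) = (x', x_n + \tfrac{C}{2}|x'|^2)$ is a $C^\infty$ (in particular $C^2$) diffeomorphism with $C^2$ inverse that maps the epigraph of $\phi$ onto the epigraph of the convex function $\phi + \tfrac{C}{2}|\cdot|^2$. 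Intersecting with a small ball $\tilde U$ and choosing $V$ to be the corresponding special Lipschitz domain, the set $\psi(U\cap\O) = \tilde U \cap V$ is convex, which is exactly the requirement of Definition \ref{o03}; hence $\O \in \mathcal{C}(C^2)$.

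For the reverse implication $\mathcal{C}(C^2) \Rightarrow \mathrm{EBC}$, I would transport supporting half-spaces rather than recompute semiconvexity. Near a boundary point $x$, straighten $\O$ by the chart $\psi$ to the convex set $\tilde U \cap V$; through its boundary point $\psi(x)$ pass a supporting hyperplane, so $\tilde U\cap V$ lies in a half-space $H$ touching at $\psi(x)$. Then $\O$ lies locally inside $\psi^{-1}(H)$, whose boundary $\psi^{-1}(\dd H)$ is a $C^2$ hypersurface whose second fundamental form is bounded by $\|\psi^{-1}\|_{C^2}$ on the compact chart. A $C^2$ hypersurface with curvature bounded by $\kappa_0$ admits on either side a tangent ball of radius comparable to $1/\kappa_0$; placing such a ball in $\psi^{-1}(H^c)$, tangent at $x$, yields an exterior ball for $\O$ of radius bounded below by a positive constant depending only on $\|\psi^{-1}\|_{C^2}$. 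Covering $\dd\O$ by finitely many charts produces a uniform lower bound, so $R(\O)>0$.

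The main obstacle is the analytic heart of the first implication: turning the purely geometric, merely tangential exterior-ball condition on a non-smooth ($C^{0,1}$) boundary into the clean, coordinate-dependent semiconvexity inequality with a constant uniform in the base point. Care is needed because the ball touches along the (tilted) normal rather than vertically, so the size of the neighborhood on which the spherical cap controls $\phi$ depends on $K$; I expect to handle this through the standard characterization of semiconvexity by uniform support from below by paraboloids of bounded opening, which requires no a priori differentiability of $\phi$.
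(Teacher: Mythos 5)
Your proposal is correct and follows essentially the same route as the paper: exterior balls force the uniform semiconvexity inequality for $\phi$ (the paper's Lemma \ref{l21}, including the same $K$-dependent lower bound on the vertical distance to the ball's center), the quadratic shear $(x',x_n)\mapsto(x',x_n+C|x'|^2)$ then produces the convex image required by Definition \ref{o03}, and the converse is obtained by transporting an exterior tangent object through the $C^2$ chart with a radius controlled by $\|\psi^{-1}\|_{C^2}$ (the paper's Lemma \ref{l22} pushes a ball rather than a supporting half-space, but the underlying second-order Taylor estimate is identical). No gaps.
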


We will prove this theorem in \S \ref{p21}.

\subsection{Acknowledgements}
The authors are grateful to prof. A.~I.~Nazarov 
for informative comments to the paper.

\section{Regularity of electromagnetic fields}
\subsection{Statement of the result}

\begin{theorem}
\label{t11}
Let $\O$ be a bounded domain in $\R^3$, 
$\O\in\mathcal{C}(W^2_3\cap W^1_\infty)$.
Let $\er$, $\mu$ be matrices-functions satisfying 
conditions \eqref{01} and \eqref{04}.
Then the "weak"\ Maxwell operators coincide with the "strong" Maxwell operators,
${\cal L} = {\cal L}_s$, ${\cal M} = {\cal M}_s$,
the equalities \eqref{05} and the estimates
\begin{equation}
\label{11} 
\|E\|_{W_2^1} \le C \|E\|_{F(\O,\er)} \quad \forall \ E \in F(\O,\er,\tau) ,
\quad  
\|H\|_{W_2^1} \le C \|H\|_{F(\O,\mu)} \quad \forall \ H \in F(\O,\mu,\nu)
\end{equation}
hold. 
Thus, the $F$-norm and the $W_2^1$-norm in these spaces are equivalent.
\end{theorem}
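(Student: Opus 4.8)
The plan is to prove the key estimates \eqref{11}, since the coincidence of spaces \eqref{05} follows at once from them (the $F$-norm controlling the $W_2^1$-norm means every $F$-field is automatically in $W_2^1$, and the reverse inclusion was noted in the introduction), and the operator equalities ${\cal L}={\cal L}_s$, ${\cal M}={\cal M}_s$ then follow by unwinding the definitions of the domains. I would treat the electric case $E\in F(\O,\er,\tau)$; the magnetic case is analogous. The strategy is \emph{localization followed by a change of variables to the convex model}. Using a finite cover of $\dd\O$ by the neighborhoods $U$ from Definition \ref{o03}, together with an interior piece, I choose a subordinate partition of unity $\{\chi_j\}$ of class $C_0^\infty$. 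The interior estimate is classical: away from $\dd\O$ one has $-\Delta = \rot\rot - \n\div$, so $\|\n(\chi E)\|_{L_2}^2$ is controlled by $\|\rot(\chi E)\|_{L_2}^2+\|\div(\chi(\cdot))\|_{L_2}^2$ plus lower-order terms, and $\rot(\chi E)$, $\div(\er E)$ are handled by the product rule using $\er\in W_3^1$ and Sobolev embedding in $\R^3$. The boundary pieces are the substance of the theorem.

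For a boundary piece, let $\psi:U\to\tilde U$ be the diffeomorphism of class $W^2_3\cap W^1_\infty$ carrying $U\cap\O$ onto the convex set $\tilde U\cap V$. I would push the field $\chi_j E$ forward under $\psi$. The point of working with $\rot$ and $\div$ (rather than with each component of $E$) is that these operators transform covariantly: under a $W^2_3\cap W^1_\infty$-diffeomorphism the pulled-back field $\tilde E$ again lies in an $F$-type space, with a transformed permittivity $\tilde\er$ still satisfying \eqref{01} and \eqref{04} (here I use that the Jacobian is bounded above and below by $W^1_\infty$, and that its derivatives sit in $W^1_3$ so that products with $W^1_3$-functions and Sobolev embedding keep everything in $L_2$), and the boundary condition $E_\tau|_{\dd\O}=0$ is preserved in the sense of Definition \ref{o11} because the tangential-trace identity is coordinate-independent. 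This reduces matters to proving the $W_2^1$-estimate for a field on a \emph{convex} domain with $\left.\tilde E_\tau\right|=0$.

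On the convex model the central tool is the integration-by-parts (Weck--Gaffney-type) identity
\begin{equation}
\label{keyid}
\int_{V}|\n u|^2\,dx = \int_{V}\l(|\rot u|^2+|\div u|^2\r)dx
+ \int_{\dd V} \mathcal{B}(u,u)\,dS,
\end{equation}
where $\mathcal B$ is a boundary quadratic form built from the second fundamental form of $\dd V$; for a field with vanishing tangential component on the boundary, $\mathcal B(u,u)$ equals (up to sign conventions) the second fundamental form evaluated on the tangential part of $u$. Convexity of $V$ makes the second fundamental form sign-definite, so this boundary term has the \emph{favorable} sign and may be dropped, yielding $\|\n u\|_{L_2}^2\le\|\rot u\|_{L_2}^2+\|\div u\|_{L_2}^2$. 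This is exactly where convexity of the model domain is used, and it is the crux of the whole argument. Since the boundary of a convex domain is only Lipschitz in general, \eqref{keyid} cannot be applied directly; I would justify it by a two-step approximation, first proving it for smooth convex domains (where $\mathcal B\ge0$ is classical), then exhausting $V$ by an increasing sequence of smooth convex subdomains $V_k\uparrow V$ and passing to the limit. Controlling the boundary term uniformly along this exhaustion, and matching the density class so that smooth fields with vanishing tangential trace are dense in $F(\O,\er,\tau)$, is the main technical obstacle; the sign-definiteness from convexity is what makes the uniform control possible. Finally I would transport the estimate back through $\psi^{-1}$ and sum over $j$, absorbing the cross terms coming from $\n\chi_j$ and the lower-order terms by a standard argument, to obtain \eqref{11}. \qed
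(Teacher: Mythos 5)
Your overall architecture --- localization by a partition of unity, a $(W^2_3\cap W^1_\infty)$-diffeomorphism to a convex model, a Gaffney-type integration-by-parts identity whose boundary term is sign-definite by convexity, and approximation by smooth domains --- is the same as the paper's. But there are two genuine gaps, and the first is exactly the point where the paper has to work hardest. You propose to exhaust the convex model by smooth convex subdomains, prove the identity there, and pass to the limit, conceding that ``matching the density class so that smooth fields with vanishing tangential trace are dense in $F(\O,\er,\tau)$ is the main technical obstacle.'' That obstacle is not a technicality to be deferred: density of $W_2^1(\O,\tau)$ (let alone of smooth fields) in $F(\O,\er,\tau)$ is essentially equivalent to the equality \eqref{05} you are trying to prove, since $F(\O,\er,\tau)$ is complete and the estimate \eqref{11} on a dense subclass of $W_2^1$-fields would already yield the coincidence of the spaces. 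Nor can you simply restrict a field $X\in F(\O,\er,\tau)$ to the smooth subdomain: the restriction does not satisfy the tangential boundary condition on the inner boundary, so the favorable sign of the boundary term is not available for it. The paper circumvents this (Theorem \ref{t52}) by exploiting self-adjointness of the extended operator ${\cal L}_\al$ on the smooth approximating domain: it sets $f_\al=(({\cal L}-i)X)|_{\O_\al}$ and $X_{1,\al}=({\cal L}_\al-i)^{-1}f_\al$, which lies in the operator domain on $\O_\al$ and hence, by the smooth-domain result (Theorem \ref{localsmooth}), in $W_2^1$ with the correct boundary condition there; it then proves $X_{1,\al}\to X$ weakly by showing the difference solves the homogeneous equation with the right boundary conditions and so vanishes. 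The uniform a priori estimate is applied to $X_{1,\al}$, not to $X|_{\O_\al}$. Without a substitute for this resolvent argument your limit passage does not close.

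Second, the variable matrix coefficient. Your key identity bounds $\|\n u\|_{L_2}^2$ by $\|\rot u\|_{L_2}^2+\|\div u\|_{L_2}^2$ plus a boundary term, but the data actually controlled is $\rot u$ and $\div(\er u)$ with $\er$ a matrix; and even if $\er=\1$ originally, the push-forward under $\psi$ produces the genuinely nonconstant matrix $\tilde\er=J^t\er J/|\det J|$, so this case is unavoidable in your own scheme. Bridging $\div u$ and $\div(\er u)$ is not a product-rule and lower-order-term matter: the paper needs the pointwise algebraic inequality of Lemma \ref{l31},
$\tr (B U \overline{B U}) + \be_1^2 ( \tr (U U^*) - \tr (U \overline U) ) \ge \be_0^2 \tr (U U^*)$,
applied with $B=\er$ and $U=\n u$, plus a separate verification (Lemma \ref{l37}) that the boundary form written for the weighted field is still nonnegative on a convex boundary when $\er$ is real and positive definite (this is where the reality of $\er$ enters). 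Your proposal contains neither ingredient, and also does not address the uniformity in the approximation parameter of the Sobolev and extension constants (Theorem \ref{t22}) needed to absorb the lower-order terms coming from $\er,\mu\in W_3^1$.
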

\begin{rem}
The bounded domains with Lipschitz boundary,
satisfying EBC meet the requirements of the Theorem (see Theorem \ref{t01}).
Our class $\mathcal{C}(W^2_3\cap W^1_\infty)$ is bigger.
For example, the domain 
$$
\O= \{ (x', x_3) \in \R^3 : -|x'|^{3/2} < x_3 < 1 -|x'|^{3/2},|x'|<1\}
$$
does not satisfy EBC, because $R(0)=0$, but belongs to 
$\mathcal{C}(W^2_3\cap W^1_\infty)$; 
one can take the mapping 
$(x',x_3)\to (x',x_3+|x'|^{3/2})$ as a bijection $\psi$.
\end{rem}
\begin{rem}
The assumptions on the coefficients $\er$, $\mu$ and 
the boundary $\dd\O$ can be slightly relaxed.
As follows from the proof, it is enough to require \eqref{01} 
and the next condition in the spirit 
of the theory of multipliers (see \cite{MSh}):
{\it for every positive $\de$ there exists a number $C(\de)$, such that}
\begin{equation}
\label{125}
\int_\O |\dd_j s|^2 |u|^2 dx \le 
\de \|\n u\|_{L_2 (\O)}^2 + C(\de) \|u\|_{L_2 (\O)}^2, 
\quad u \in W_2^1 (\O), \quad s = \er \ \text{or}\ \mu .
\end{equation}
For $s \in W_3^1 (\O)$ the condition \eqref{125} is satisfied 
(see below Lemma \ref{l32}).
For diffeomorphisms $\psi$ which map the domain $\O$ locally 
to the convex domain, 
it is sufficient to require $\psi \in W_\infty^1$ 
and the property \eqref{125} for $s = \n\psi$.
\end{rem}

For simplicity, we formulate and prove Theorem \ref{t11} 
only for bounded domains. 
It is clear from the proof, that it can be extended 
to unbounded domains with appropriate modification 
of the assumptions on the coefficients.
We now state the case of the operator 
with periodic coefficients in the infinite cylinder.

\begin{theorem}
\label{t11'}
Let $\O = U \times \R$ be a cylinder, $U \subset \R^2$ be 
a bounded domain, $U \in \mathcal{C}(W^2_p)$, $p>2$.
Let $\er$, $\mu$ satisfying \eqref{01} 
be periodic along the axis of the cylinder with period $a$ 
and $\er, \mu \in W_3^1 (U \times [0,a])$.
Then ${\cal L} = {\cal L}_s$, ${\cal M} = {\cal M}_s$ 
and the relations \eqref{05} and \eqref{11} hold.
\end{theorem}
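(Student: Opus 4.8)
The plan is to deduce Theorem~\ref{t11'} from the relaxed form of Theorem~\ref{t11} indicated in the remark following its statement, in which the hypotheses on $\O$ enter only through condition~\eqref{01} and the multiplier estimate~\eqref{125}: it suffices that \eqref{125} hold for $s=\er$ and $s=\mu$, and that the local diffeomorphisms $\psi$ straightening $\dd\O$ onto convex sets satisfy $\psi\in W^1_\infty$ together with \eqref{125} for $s=\n\psi$. Granting these, the \emph{a priori} estimate~\eqref{11} and the equalities~\eqref{05} follow, and with them the operator identities $\mathcal{L}=\mathcal{L}_s$, $\mathcal{M}=\mathcal{M}_s$, exactly as in the bounded case: the inclusion $\dom\mathcal{L}_s\subset\dom\mathcal{L}$ is automatic, while~\eqref{11} yields the missing inclusions $F(\O,\er,\tau)\subset W^1_2(\O,\tau)$ and $F(\O,\mu,\nu)\subset W^1_2(\O,\mu,\nu)$, i.e.~\eqref{05}, whence $\dom\mathcal{L}\subset\dom\mathcal{L}_s$ and the domains coincide. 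Thus only two genuinely new points must be addressed: the weaker planar regularity $U\in\mathcal{C}(W^2_p)$, $p>2$, and the unboundedness of the cylinder.

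For the charts near the lateral boundary $\dd U\times\R$ I would respect the product structure. If $\psi:U'\to\tilde U'$ is a $W^2_p$-diffeomorphism carrying a piece of $U$ onto a convex planar set, then $\Psi(x',t)=(\psi(x'),t)$ carries the corresponding piece $U'\times I$ of the cylinder onto a convex set, and $\Psi,\Psi^{-1}$ inherit the required bijectivity. Since $p>2$, the two-dimensional embedding $W^2_p(U')\hookrightarrow C^{1,\,1-2/p}(U')$ gives $\n\psi\in L_\infty$, hence $\Psi\in W^1_\infty$. The crucial estimate~\eqref{125} for $s=\n\Psi$ I would obtain from its planar counterpart. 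The second derivatives of $\Psi$ are those of $\psi$ and depend on $x'$ only, so $V:=|\n^2\psi|^2\in L^{p/2}(U')$ with $p/2>1$; in two dimensions such a potential is infinitesimally form-bounded with respect to $-\D$, that is $\int_{U'}|\n^2\psi|^2|u|^2\,dx'\le\de\,\|\n_{x'}u\|_{L_2(U')}^2+C(\de)\,\|u\|_{L_2(U')}^2$ for every $u\in W^1_2(U')$, by H\"older together with the embedding $W^1_2\hookrightarrow L^{q}$, $q<\infty$, and Young's inequality. Integrating this over $t\in I$, and using that $|\n^2\Psi(x',t)|^2=|\n^2\psi(x')|^2$ is independent of $t$ while $\int_I\|\n_{x'}u(\cdot,t)\|_{L_2(U')}^2\,dt\le\|\n u\|_{L_2(U'\times I)}^2$, produces exactly~\eqref{125} for $s=\n\Psi$ on the chart.

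It then remains to verify~\eqref{125} for $s=\er,\mu$ on the whole cylinder and to globalise. Since $\er,\mu\in W^1_3(U\times[0,a])$ are periodic, Lemma~\ref{l32} applied on each slab $U\times[na,(n+1)a]$ gives~\eqref{125} on that slab with a constant $C(\de)$ independent of $n$; summing over $n$, with $\sum_n\|\n u\|_{L_2(U\times[na,(n+1)a])}^2=\|\n u\|_{L_2(\O)}^2$, delivers~\eqref{125} on $\O$. The estimate~\eqref{11} is finally assembled by a locally finite partition of unity on $\O$ with uniformly bounded derivatives, obtained by translating one period-$a$ partition along the axis; by compactness of $\dd U$ there are finitely many cross-section charts per period, and periodicity of $\er,\mu$ renders all local constants independent of the axial position, so the local estimates sum to~\eqref{11}. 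The main obstacle is the first new point: establishing~\eqref{125} for $\n\Psi$ under the sole assumption $p>2$. The resolution rests on the observation that the product chart confines the singular second derivatives to the two-dimensional cross-section, where $|\n^2\psi|^2\in L^{p/2}$ with $p/2>1$ is precisely the planar threshold for infinitesimal form-boundedness; this is why the genuinely three-dimensional requirement $W^2_3$ of Theorem~\ref{t11} may here be weakened to the planar condition $W^2_p$, $p>2$.
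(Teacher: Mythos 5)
Your proposal follows exactly the route the paper indicates: Theorem \ref{t11'} is deduced from the relaxed form of Theorem \ref{t11} (condition \eqref{125} in place of $W^2_3$-regularity of the charts and coefficients), using product charts $\Psi(x',t)=(\psi(x'),t)$ so that $\n^2\Psi$ depends only on $x'$ and \eqref{125} for $s=\n\Psi$ reduces to the planar statement for $s\in W^1_p(U)$, $p>2$, while periodicity and a locally finite partition of unity handle the unboundedness of the cylinder. This is essentially the same argument the paper sketches in the remarks surrounding the statement, and your more detailed elaboration of it is correct.
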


\begin{rem}
Assumptions on the boundary of the domain were relaxed 
to $\mathcal{C}(W^2_p)$, $p>2$,
since corresponding diffeomorphisms depend only on two variables;
it is easy to see that if 
$s = s(x_1, x_2)$, $s \in W_p^1 (U)$, $p>2$, 
then the requirement \eqref{125} is fulfilled for $\O = U \times [0,a]$.
\end{rem}

\subsection{Comments}
It is natural to consider the Maxwell operator on manifolds 
of arbitrary dimension, see \cite{Weck, DF}.
In the present paper we consider only the case of the domain in $\R^3$.
Inequalities \eqref{11} for $\er = \mu = \1$, written in the language of 
differential forms, are known as the Gaffney-Friedrichs inequalities. 
They were proved (without connection with the Maxwell operator) 
respectively by Gaffney for manifolds without boundary \cite{Ga} 
and by Friedrichs in the case of smooth manifolds with boundary \cite{Fr}.
It should be noted that the inequalities \eqref{11}, 
established only for fields in $W_2^1$, {\it do not imply} 
that the spaces \eqref{05} coincide: 
for example, in a polyhedron with an incoming edge, \eqref{11} 
holds for $\er = \mu = \1$, but \eqref{05} fails.

Without claiming completeness of the review, 
we list some known results for bounded domains. 
R.~Leis proved in 1968 equalities \eqref{05} for 
$\dd\O \in C^3$, $\er, \mu \in C^5 (\overline\O)$ \cite{L68}.
J.~Gobert obtained in 1971 the result 
(in multidimensional case) for $\dd\O \in C^2$, $\er = \mu = \1$ \cite{Go}, 
and C.~Weber in 1981 for 
$\dd\O \in C^2$, $\er, \mu \in C^1 (\overline\O)$ 
(see also the next subsection) \cite{Weber}.
In 1982 J.~Saranen proved "electric"\ equality \eqref{05}
in the case of convex domain $\O \subset \R^3$ for 
$\er \in \operatorname{Lip} (\overline\O)$ \cite{Sa}.
"Magnetic"\ equality \eqref{05} for convex domains was established 
by F.~Kikuchi and S.~Kaizu in 1986 for $\mu = \1$ \cite{KaKi}.
In 2001 M.~Mitrea found out \eqref{05} for Lipschitz domains, 
satisfying EBC, for $\er = \mu = \1$ \cite{Mi}.
In a recent paper \cite{AC} G.~Alberti and Y.~Capdeboscq investigated 
regularity of solutions of Maxwell system with nonselfadjoint $\er$ and $\mu$.
In particular, they established equalities \eqref{05} 
in bounded domains with $C^{1,1}$ smooth boundary 
under the following assumptions about the coefficients:
$$
\er + \er^* \ge \er_0 \1 > 0, \quad 
\mu + \mu^* \ge \mu_0 \1 > 0, \qquad \er, \mu \in W^1_p (\O), \ p >3 .
$$

Thus, one can say that the "electric"\ case 
of Theorem \ref{t11} was practically known. 
On the contrary, "magnetic"\ case for varying coefficients $\mu$ 
in the convex domain was not analyzed.
For convenience, we give one proof for both cases.

It should be noted that the question of compactness of embeddings
$$
F(\O,\er,\tau) \subset L_2(\O) \quad \text{and} \quad 
F(\O,\mu,\nu) \subset L_2(\O) 
$$
was also actively studied.
This compactness provides Fredholm solvability for corresponding problems 
and discreteness of spectra of the corresponding operators.
If the equalities \eqref{05} hold, then the embeddings are certainly compact.
But compactness of this embeddings takes place already 
for arbitrary Lipschitz manifolds with boundary 
(see \cite{P} and the references therein).

\subsection{Result in the smooth domain}
In this paragraph we follow the papers \cite{BS87smzh, BS89aa}.
We introduce the spaces of gradients of solutions of scalar elliptic problems
$$
E (\O,\er,\tau) = \{ \n\ph : \ph \in \mathring W_2^1 (\O),\ 
\div (\er\n\ph) \in L_2 (\O) \},
$$
$$
E (\O,\mu,\nu) = \{ \n\eta : \eta \in W_2^1 (\O),\
\div (\mu\n\eta) \in L_2 (\O), \left.(\mu\n\eta)_\nu\right|_{\dd\O} = 0 \}.
$$
These spaces are subspaces of $F(\O,\er,\tau)$ and $F(\O,\mu,\nu)$
respectively.
For bounded domains with Lipschitz boundary, 
the following decomposition takes place
\begin{equation}
\label{13}
F(\O,\er,\tau) = W_2^1(\O,\tau) + E(\O,\er,\tau).
\end{equation}
In \cite{BS87smzh} it was proved for the case $\er = \1$, 
but the proof works without changes for $\er \in W_3^1 (\O)$.

The similar decomposition for magnetic fields
\begin{equation}
\label{14}
F(\O,\mu,\nu) = W_2^1(\O,\mu,\nu) + E(\O,\mu,\nu)
\end{equation}
is not always true.
In \cite{BS87zns} it was shown that this decomposition holds
for $\mu \in C^1 (\overline\O)$ in "domains with edges and vertices";
see \cite{BS87zns} for the precise description of the class of domains; 
it includes all domains locally $C^2$-diffeomorphic to polyhedrons.
In \cite{F}, \eqref{14} was shown in the case of
$\dd\O \in C^{3/2+\epsilon}$, $\epsilon > 0$, $\mu \in W_3^1 (\O)$.
Also the domain $\O$ with the boundary of class $C^{3/2}$ was constructed there,
for which the equality \eqref{14} does not hold (for $\mu = \1$). 
It should be noted also, that equalities \eqref{13}, \eqref{14} take place
for "domains with screens"\ \cite{BS93, F96}.

If the equality \eqref{13} (resp. \eqref{14}) holds,
then possible singularities of functions from the class $F(\O,\er,\tau)$
(resp. $F(\O,\mu,\nu)$) are reduced to the singularities 
of functions from $E(\O,\er,\tau)$ (resp. $E(\O,\mu,\nu)$), 
i.e. gradients of the solutions of the Dirichlet (resp. Neumann) problem
for scalar elliptic equation of the second order 
with right hand side from $L_2(\O)$.
Scalar elliptic problems are well studied.
Strong solvability is known both for smooth and for convex domains, 
or more generally, for domains of class 
${\mathcal C} (W^2_3\cap W^1_\infty)$ in our terminology
(see \cite{LU} and \cite{Gr})
\footnote{As a matter of fact, it was the analogy 
with scalar equations of second order
that stimulated our confidence that in convex domains 
the "weak" Maxwell operator coincides with the "strong" one.}.
The inclusions 
$$
E(\O,\er,\tau) \subset W_2^1 (\O,\tau)\quad \text{and} \quad
E(\O,\mu,\nu) \subset W_2^1 (\O,\mu,\nu)
$$
follow from the strong solvability of scalar problems for 
$\er, \mu \in W_3^1 (\O)$.
These results, in particular, imply

\begin{theorem}
\label{localsmooth}
Let $\O \subset \R^3$ be a bounded domain, $\dd\O \in C^2$, 
$\er, \mu$ 
be matrices-functions satisfying conditions \eqref{01} and \eqref{04}. 
Then the equalities \eqref{05} and the estimates \eqref{11} hold.
\end{theorem}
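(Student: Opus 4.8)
The plan is to obtain Theorem \ref{localsmooth} not by analyzing fields in $F$ directly, but as a short consequence of the two decompositions \eqref{13} and \eqref{14} combined with the regularity of the gradient subspaces $E(\O,\er,\tau)$ and $E(\O,\mu,\nu)$. The idea is to split an arbitrary field into a Sobolev-regular part and a gradient part, and then to observe that the gradient part is itself Sobolev-regular because it solves a scalar elliptic boundary value problem, whose strong solvability in a $C^2$ domain is classical. Once the set-theoretic equalities \eqref{05} are in hand, the norm estimates \eqref{11} follow by a soft functional-analytic argument.

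First I would treat the electric equality. Since $\dd\O \in C^2$ the domain is in particular Lipschitz, and $\er \in W_3^1(\O)$ by \eqref{04}, so the decomposition \eqref{13}, namely $F(\O,\er,\tau) = W_2^1(\O,\tau) + E(\O,\er,\tau)$, is available. Next I would invoke strong solvability of the scalar Dirichlet problem for the operator $\ph \mapsto \div(\er\n\ph)$ in a $C^2$ domain with coefficient $\er \in W_3^1$ (see \cite{LU}), which yields $E(\O,\er,\tau) \subset W_2^1(\O,\tau)$. Substituting this into \eqref{13} collapses the right-hand side to $W_2^1(\O,\tau)$, so $F(\O,\er,\tau) \subset W_2^1(\O,\tau)$; together with the reverse inclusion noted after \eqref{04}, this gives the first equality in \eqref{05}.

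Second I would treat the magnetic equality in the same fashion. The one nonroutine point is the availability of the decomposition \eqref{14}, which, unlike \eqref{13}, fails for general Lipschitz domains; however, since $C^2 \subset C^{3/2+\epsilon}$, the result of \cite{F} for $\dd\O \in C^{3/2+\epsilon}$ and $\mu \in W_3^1$ applies and supplies \eqref{14}. Then strong solvability of the scalar Neumann problem for $\eta \mapsto \div(\mu\n\eta)$ gives $E(\O,\mu,\nu) \subset W_2^1(\O,\mu,\nu)$, and the same collapsing argument produces the second equality in \eqref{05}.

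Finally, for the estimates \eqref{11} I would argue by the Banach inverse mapping theorem. The space $F(\O,\er,\tau)$ is complete in its $F$-norm and $W_2^1(\O,\tau)$ is complete in the $W_2^1$-norm; by \eqref{05} these are the same vector space, while the bound $\|u\|_{F(\O,\er)} \le C\|u\|_{W_2^1}$ is immediate from the implication $u\in W_2^1 \Rightarrow \er u \in W_2^1$ recorded after \eqref{04}. Hence the identity map from the $W_2^1$-norm to the $F$-norm is a continuous bijection of Banach spaces, so its inverse is continuous, which is exactly the first inequality in \eqref{11}; the magnetic estimate is identical. I expect the genuinely delicate ingredient to be the magnetic decomposition \eqref{14} (and, beneath it, the Neumann regularity with $W_3^1$ coefficients), the electric side and the norm-equivalence step being essentially formal.
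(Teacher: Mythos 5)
Your proposal is correct and follows essentially the same route as the paper: the authors derive Theorem \ref{localsmooth} precisely from the decompositions \eqref{13} and \eqref{14} (the latter supplied by \cite{F} for $\dd\O \in C^{3/2+\epsilon}$, hence for $C^2$) together with the inclusions $E(\O,\er,\tau) \subset W_2^1(\O,\tau)$ and $E(\O,\mu,\nu) \subset W_2^1(\O,\mu,\nu)$ obtained from strong solvability of the scalar Dirichlet and Neumann problems. Your closed-graph argument for the norm equivalence \eqref{11} is a standard and valid way to make explicit what the paper leaves implicit.
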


\subsection{Plan of the paper}
Using localization and admissible diffeomorphisms, 
we reduce the proof of Theorem \ref{t11} 
to the case of special Lipschitz domain.
Considering this case, we follow the idea of \cite{Mi}:
model domains are approximated by smooth ones (\S 2),
in smooth domains uniform a priori estimates are proved (\S 4),
where the crucial point was the algebraic Lemma \ref{l31} (\S 3).
It should be noted, that the estimate \eqref{11}, 
established in Theorem \ref{localsmooth} 
for smooth domains, is not sufficient for our purposes.
We need the estimate, which is {\it uniform} with respect to 
the smooth domain, approximating the convex nonsmooth domain.
Finally, in \S 5 Theorem \ref{t11} is deduced from a priori estimates. 

\section{Classes of domains}
\subsection{Domains, satisfying EBC}
\label{p21}
 
\begin{lemma}
\label{l21}
Let $\O$ be a special Lipschitz domain, 
\begin{equation}
\label{20}
\O = \{ (x', x_n) \in \R^n : x_n > \phi (x') \} ,
\end{equation}
\begin{equation}
\label{21}
|\phi (x_1') - \phi (x_2')| \le K |x_1'-x_2'| 
\quad \forall\ x_1', x_2' \in \R^{n-1} .
\end{equation}
If $\O$ satisfies EBC on the part of the boundary 
$\dd\O \cap \{x = (x',x_n) : |x'| < \rho\}$, 
then there are positive constants 
$\epsilon_0$, $C_0$, such that
\begin{equation}
\label{22}
\phi(y+z) + \phi(y-z) - 2\phi(y) \ge - C_0 |z|^2 \qquad 
\forall \ y \in \R^{n-1}: |y| <\rho, 
\quad \forall \ z \in \R^{n-1}: |z| < \epsilon_0 .
\end{equation}
\end{lemma}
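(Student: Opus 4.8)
The plan is to read the external ball condition geometrically as a one–sided supporting-sphere property of the graph $x_n=\phi(x')$, which is precisely a \emph{semiconvexity} statement, and that is what \eqref{22} asserts. Fix $y\in\R^{n-1}$ with $|y|<\rho$ and look at the boundary point $(y,\phi(y))$. By EBC on $\dd\O\cap\{|x'|<\rho\}$ (Definition \ref{o04}) the infimum $R_*$ of $R(x)$ over that part of $\dd\O$ is positive; put $r_0=R_*/2$. Since a touching ball of radius $R(x)>r_0$ contains a ball of radius $r_0$ touching $\O$ at the same point (shrink toward the contact point), there is a center $c=(c',c_n)$ with $|(y,\phi(y))-c|=r_0$ and $B_{r_0}(c)\cap\O=\emptyset$, i.e. $B_{r_0}(c)\subset\{x_n\le\phi(x')\}$. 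Writing $g(x')=c_n+\sqrt{r_0^2-|x'-c'|^2}$ for the upper cap of this sphere, a vertical–chord argument (the chord interior would otherwise enter $\O$) shows the contact lies on the \emph{upper} hemisphere, so $\phi(y)=g(y)$, while containment of the ball in $\{x_n\le\phi\}$ gives $\phi(x')\ge g(x')$ for all $x'$ with $|x'-c'|\le r_0$. Thus $g$ touches $\phi$ from below at $y$.

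The essential point, and the only real obstacle, is to keep the contact \emph{away from the equator} of the sphere, where $g$ becomes vertical and $|D^2g|$ blows up; this is exactly where the Lipschitz hypothesis \eqref{21} is used. Since $\phi$ is $K$-Lipschitz and $g\le\phi$ with equality at $y$, for every unit $e$ and small $t>0$ one has $g(y+te)-g(y)\le\phi(y+te)-\phi(y)\le Kt$; dividing by $t$ and letting $t\to0^+$ yields $\n g(y)\cdot e\le K$ for all $e$, hence $|\n g(y)|\le K$. As $\n g(y)=-(y-c')/w_0$ with $w_0=\sqrt{r_0^2-|y-c'|^2}$, this gives $|y-c'|/w_0\le K$, and therefore $w_0\ge r_0/\sqrt{1+K^2}$: the contact point is uniformly far from the equator, with a bound depending only on $r_0$ and $K$.

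It then remains to estimate the second difference of $g$. A direct computation gives $D^2g=-w^{-1}I-w^{-3}(x'-c')(x'-c')^{T}$ with $w=\sqrt{r_0^2-|x'-c'|^2}$, whose smallest eigenvalue is exactly $-r_0^2/w^3$. Choosing $\epsilon_0=\epsilon_0(r_0,K)$ small enough forces, for $|z|<\epsilon_0$, the whole segment between $y-z$ and $y+z$ to remain in the region $w\ge r_0/(2\sqrt{1+K^2})$, on which $D^2g\ge -C_0 I$ with $C_0=8(1+K^2)^{3/2}/r_0$. Taylor's formula with integral remainder, $g(y+z)+g(y-z)-2g(y)=\int_0^1(1-t)\,(z^{T}D^2g(y+tz)z+z^{T}D^2g(y-tz)z)\,dt$, then gives $g(y+z)+g(y-z)-2g(y)\ge -C_0|z|^2$.

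Finally I would combine the two ingredients: using $\phi\ge g$ at $y\pm z$ and $\phi(y)=g(y)$,
\[
\phi(y+z)+\phi(y-z)-2\phi(y)\ \ge\ g(y+z)+g(y-z)-2g(y)\ \ge\ -C_0|z|^2,
\]
which is \eqref{22}. Since $r_0=R_*/2$ and $K$ are fixed for the whole part of the boundary under consideration, both $\epsilon_0$ and $C_0$ are independent of the chosen point $y$, so the estimate is uniform over $|y|<\rho$, as required.
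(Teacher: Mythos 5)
Your proof is correct and follows essentially the same route as the paper: a supporting external sphere touching the graph from below, the Lipschitz bound forcing the contact point uniformly away from the equator (your $w_0\ge r_0/\sqrt{1+K^2}$ is exactly the paper's $\phi(y)-q\ge R/\sqrt{1+K^2}$), and a second-order Taylor estimate of the spherical cap summed over $\pm z$. The only difference is cosmetic: you control the remainder via the explicit Hessian of $g$ and the integral form of Taylor's formula, whereas the paper expands $\sqrt{1+\al}$ and tracks a uniform $O(|z|^3)$ term.
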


\begin{proof}
Fix a point $y \in \R^{n-1}$, $|y| < \rho$.
Let $(p,q)$ be a center of a ball of radius $R = R(\O)$, 
touching the domain $\O$ from outside at the point $(y, \phi (y))$,
\begin{equation}
\label{23}
|y-p|^2 + (\phi(y)-q)^2 = R^2 .
\end{equation}
This external ball lies outside the cone 
$\{ (x', x_n)\in\R^n: x_n>\phi(y)+K|x'-y|\}$. 
Hence, $\frac{|y-p|}{\phi(y)-q} \le K$ and 
\begin{equation}
\label{24}
\phi(y)-q\ge\frac{R}{\sqrt{1+K^2}}  \,.
\end{equation}
Further,
$$
\phi(y+z) - q \ge \sqrt{R^2 - |y+z-p|^2} = 
\sqrt{R^2 - |y-p|^2} \left(1 - \frac{2 \<z,y-p\>}{R^2 - |y-p|^2}
- \frac{z^2}{R^2 - |y-p|^2}\right)^{1/2} .
$$
In view of \eqref{23} and the Taylor formula
$$
\sqrt{1+\al} = 1 + \frac\al{2} - \frac{\al^2}8 +O (\al^3), \quad \al \to 0,
$$
we obtain
\begin{equation}
\label{25}
\phi(y+z) - q \ge \phi(y) - q - \frac{\<z,y-p\>}{\phi(y) - q} 
- \frac{z^2}{2(\phi(y) - q)} - \frac{\<z,y-p\>^2}{2(\phi(y) - q)^3} 
+ O(|z|^3), \quad z \to 0 .
\end{equation}
Due to \eqref{24} we can choose 
$\epsilon > 0$ independent on $y$
such that
$$
\frac{\left|2 \<z,y-p\> +z^2\right|}{(\phi(y) - q)^2} \le \frac12 
\quad \text{for} \ |z| \le \epsilon .
$$
In this case the constant in the simbol $O$ in the formula \eqref{25} 
is uniform with respect to
$y \in \R^{n-1}$, $|y| < \rho$.
Summarizing the equation \eqref{25} for vectors $z$ and $-z$, we get
$$
\phi(y+z) + \phi(y-z) - 2\phi(y) \ge 
- \frac{z^2}{\phi(y) - q} - \frac{\<z,y-p\>^2}{(\phi(y) - q)^3} + O(|z|^3)
\ge - C_0 |z|^2 \quad \text{for} \ |z| \le \epsilon_0 
$$
for sufficiently small $\epsilon_0$.
\end{proof}

\begin{lemma}
\label{l22}
Let $\O\subset \R^n$ be a domain, $0\in\dd\O$,
$$
\hat{B}=\{(x', x_n)\in\R^n:|x'|^2+|x_n-R|^2<R^2\}\subset B_1(0),
$$
and $\hat{B}\cap \O=\emptyset$. 
Let $\psi:B_1(0)\to U$ be a $C^2$-diffeomorphism. 
Then the set $U$ contains a ball $\tilde{B}$ of radius $a$, 
nonintersecting with $\psi(\O\cap B_1(0))$. 
Also, $\psi(0)\in\dd\tilde{B}$, 
and the radius $a$ depends only on $R$ and on $\|\psi^{-1}\|_{C^2(U)}$.
\end{lemma}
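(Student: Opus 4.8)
The plan is to reduce the statement to a quantitative interior ball estimate for the image of the excluded ball $\hat{B}$ under $\psi$. Since $\psi$ is a bijection and $\hat{B}\subset B_1(0)$ satisfies $\hat{B}\cap\O=\emptyset$, the open set $\psi(\hat{B})$ is contained in $U=\psi(B_1(0))$ and is disjoint from $\psi(\O\cap B_1(0))$, while $y_0:=\psi(0)\in\dd\psi(\hat{B})$. Hence it suffices to inscribe in $\psi(\hat{B})$ a ball $\tilde{B}$ of radius $a$ with $y_0\in\dd\tilde{B}$: such a ball automatically lies in $U$ and avoids $\psi(\O\cap B_1(0))$.

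To control $\psi(\hat{B})$ I would use a defining function. Put $g(x)=|x|^2-2Rx_n$, so that $\hat{B}=\{g<0\}$, $g(0)=0$ and $\n g(0)=-2Re_n$, and set $G:=g\circ\psi^{-1}$ on $U$. Then $\psi(\hat{B})=\{y\in U:G(y)<0\}$ and $G(y_0)=0$. By the chain rule $\n G(y_0)=(D\psi^{-1}(y_0))^{T}\n g(0)=-2R\,(D\psi^{-1}(y_0))^{T}e_n$, and, since $\psi^{-1}$ maps $U$ into the fixed unit ball, $|\psi^{-1}|\le1$ everywhere, so $|\n g|\le 2+2R$ on $U$ and the Hessian is bounded by
\[
M:=\sup_U\|D^2G\|\le 2\,\|\psi^{-1}\|_{C^2(U)}^2+(2+2R)\,\|\psi^{-1}\|_{C^2(U)}.
\]
Thus $M$ is controlled by $R$ and $\|\psi^{-1}\|_{C^2(U)}$ alone.

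Next I would inscribe the ball. Let $m:=|\n G(y_0)|>0$, let $\nu:=-\n G(y_0)/m$ be the inner unit normal of $\{G<0\}$ at $y_0$, and set $\tilde{B}:=B_a(y_0+a\nu)$, so that indeed $y_0\in\dd\tilde{B}$. For $y\in\tilde{B}$ write $y-y_0=a(\nu+u)$ with $|u|<1$; Taylor's formula at $y_0$ gives $G(y)=a\,\n G(y_0)\cdot(\nu+u)+\tfrac12(y-y_0)^{T}D^2G(\xi)(y-y_0)$ for some $\xi$ on the segment $[y_0,y]$. Writing $\hat g=\n G(y_0)/m=-\nu$ and using $|u|\le1$ one has $\n G(y_0)\cdot(\nu+u)=m(\hat g\cdot u-1)=-m(1-\hat g\cdot u)\le-\tfrac{m}{2}|\nu+u|^2$, while the remainder is at most $\tfrac12Ma^2|\nu+u|^2$. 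Hence $G(y)\le\tfrac{a}{2}|\nu+u|^2\,(-m+aM)$, which is strictly negative on the open ball as soon as $a<m/M$. Taking in addition $a\le\tfrac12\|\psi^{-1}\|_{C^1}^{-1}$ guarantees $\tilde{B}\subset U$ (so that $G$ and the Taylor expansion make sense), because $y_0=\psi(0)$ lies at distance $\ge\|\psi^{-1}\|_{C^1}^{-1}$ from $\dd U$. With $a:=\min\{m/(2M),\,(2\|\psi^{-1}\|_{C^1})^{-1}\}$ we obtain $\tilde{B}\subset\{G<0\}=\psi(\hat{B})$, as required.

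The one place where a genuine lower bound is needed — and the step I expect to be the main obstacle — is the positivity, \emph{uniform in the diffeomorphism}, of $m=2R\,|(D\psi^{-1}(y_0))^{T}e_n|$. Because $\psi^{-1}$ is a diffeomorphism its differential is invertible, so $m>0$; quantitatively one estimates $|(D\psi^{-1}(y_0))^{T}e_n|\ge\sigma_{\min}(D\psi^{-1}(y_0))=\|D\psi(0)\|^{-1}$, i.e. the bound rests on the nondegeneracy of $D\psi^{-1}$ (equivalently a bound on $\|D\psi\|$, which is part of the $C^2$ diffeomorphism data). Collecting the estimates for $m$ and $M$ then shows that $a$ can be chosen depending only on $R$ and on the $C^2$-data of $\psi^{-1}$ (through $M$ and through the nondegeneracy of its differential), which completes the argument.
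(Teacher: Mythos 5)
Your proof is correct and essentially coincides with the paper's: both Taylor-expand the pulled-back defining inequality $|x|^2<2Rx_n$ of $\hat{B}$ to second order in $y$ and inscribe a ball tangent at $\psi(0)$ in the inner normal direction, whose radius is the ratio of the first-order (gradient) term --- your $m$, the paper's $\gamma=\dd x_n/\dd y_n(0)$ after a rotation of the $y$-coordinates --- to the second-derivative bound coming from $\|\psi^{-1}\|_{C^2}$. The lower bound on $m$ that you flag as the delicate point appears in the paper's argument in exactly the same way (its constant $\gamma>0$), and in both proofs it rests on the nondegeneracy of $D\psi^{-1}$, i.e.\ on the $C^2$-diffeomorphism data rather than on $\|\psi^{-1}\|_{C^2(U)}$ alone.
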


\begin{proof}
Denote the new variables as $y=\psi(x)$, $x = \psi^{-1} (y)$.
Without loss of generality, one may assume that
\begin{equation}
\label{255}
x(0)=0,\quad 
\dfrac{\dd x_n}{\dd y_i}(0)=0,\ i = 1, \dots, n-1,
\quad \dfrac{\dd x_n}{\dd y_n}(0)>0.
\end{equation}
Let us show that for sufficiently small $a$ the implication
\begin{equation}
\label{256}
(y_n-a)^2+|y'|^2<a^2 \quad \Rightarrow \quad (x_n-R)^2+|x'|^2<R^2.
\end{equation}
holds.
Since $\psi^{-1}$ belongs to $C^2(U)$, 
and \eqref{255} holds, we have
$$
\quad |x|\le c_1|y|\quad \text{ and }\quad 
x_n\ge\gamma y_n - c_2|y|^2,\quad \gamma>0. 
$$
Substituting this estimates in \eqref{256}, we get 
that it is sufficient to establish an implication
$$
|y|^2<2a y_n \quad \Rightarrow \quad 
-2R\gamma y_n + (2Rc_2+c_1^2)|y|^2<0.
$$
It is true for $a<\dfrac{R\gamma}{2Rc_2+c_1^2}$.
\end{proof}

{\it Proof of the Theorem \ref{t01}.}
Let $\O$ be a Lipschitz domain, satisfying EBC, $x_0\in\dd\O$.
Let $D$ and $\Lambda$ be a neighbourhood of the point $x_0$ 
and a special Lipschitz domain from the Definition \ref{o02} respectively.
Without loss of generality, we may suppose that $x_0 = 0$, $D = B_{2\rho} (0)$. 
By Lemma \ref{l21} the function $\phi$ defining $\Lambda$ 
satisfies \eqref{22} with some constants $C_0$ and $\epsilon_0$.
Consider $C^2$-diffeomorphism $\psi: \R^n \to \R^n$,
$$
\psi: (x',x_n) \mapsto (y',y_n) = (x',x_n + \xi(x')) ,
$$ 
where $\xi$ is a smooth function of $x'$, 
$\xi(x') = C_0 |x'|^2$ for $|x'| < \rho$, 
and $\xi$ satisfies also the Lipschitz condition on the whole $\R^{n-1}$.
Domain $\psi(\Lambda)$ is the special Lipschitz domain,
which is determined by function $\tilde{\phi}$, 
satisfying inequality
$$
\tilde{\phi}(y'+z) + \tilde{\phi}(y'-z) - 2\tilde{\phi}(y') \ge 0 \qquad 
\forall \ y' \in \R^{n-1}: |y'| <\rho, 
\quad \forall \ z \in \R^{n-1}: |z| < \epsilon_0 .
$$
Thus, the set $\psi(\Lambda) \cap B_r(0)$, $r<\rho$ is convex; 
one can take $\psi(\Lambda)$ and 
$\psi^{-1} (B_r (0))$ with sufficiently small $r$ as 
the neighbourhoods $V$ and $U$ from the Definition \ref{o03}.
Therefore $\O \in {\cal C} (C^2)$. 

Let us establish the inverse inclusion. 
Since convex domains satisfy EBC, it follows from Lemma \ref{l22} 
that at every point of the boundary of a domain of class $\mathcal{C}(C^2)$ 
there is an external ball, and its radius can be chosen 
to be independent of the point. 
$\qed$

\begin{rem}
Actually we have shown that the class of domains with Lipschitz boundary,
satisfying EBC, coincides with $\mathcal{C}(C^\infty)$ too.
\end{rem}

\subsection{Approximation of the convex domains by the smooth ones}

\begin{theorem}
\label{t22}
Let $\O \subset \R^n$ be a special Lipschitz domain \eqref{20}, \eqref{21}, 
and the set
$$
\O \cap \{ x = (x',x_n) : |x'| < 2\rho \}
$$ 
be convex.
Then there exists an ascending collection of special Lipschitz domains
$\{\O_\al\}_{\al \in (0,1)}$ of class $C^\infty$, such that

1) The domains $\O_\al$ are described by the formula
\begin{equation}
\label{26}
\O_\al = \{ (x', x_n) \in \R^n : x_n > \phi_\al (x') \} ,
\end{equation}
\begin{equation*}
|\phi_\al (x_1') - \phi_\al (x_2')| \le K |x'_1-x'_2| \ \ \forall\ x_1', x_2' ,
\end{equation*}
$\mathop{\bigcup}\limits_{\al \in (0,1)} \O_\al = \O$,
and the domains
$\O_\al \cap \{ x = (x',x_n) : |x'| < \rho \}$ are convex.

2) There are extension operators
$$
P_\al : W_2^1 (\O_\al) \to W_2^1 (\R^n), \quad  
\|P_\al\| \le C_1 ,
$$ 
such that
$\|\n (P_\al u)\|_{L_2 (\R^n)} \le C_1 \|\n u\|_{L_2 (\O_\al)}$
for all  $u \in W_2^1 (\O_\al)$.

3) For $n>2$ 
$$
\|u\|_{L_{\frac{2n}{n-2}} (\O_\al)} \le C_2 \|\n u\|_{L_2(\O_\al)} 
\quad \forall \ u \in W_2^1 (\O_\al) .
$$

The constants $C_1, C_2$ do not depend on $\al$. 
\end{theorem}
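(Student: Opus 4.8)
The plan is to reduce everything to a suitable smoothing of the graph function $\phi$. Since the set $\O\cap\{|x'|<2\rho\}$ is convex, the function $\phi$ is convex on the ball $\{|x'|<2\rho\}$. Fix a non-negative radial mollifier $\omega\in C_c^\infty(B_1(0))$ with $\int\omega=1$, put $\omega_\delta(z)=\delta^{-n}\omega(z/\delta)$ and $\phi^{(\delta)}=\phi*\omega_\delta$. Choosing a decreasing bijection $\delta=\delta(\al)$ of $(0,1)$ onto $(0,\delta_0)$ with $\delta_0<\rho$ (for instance $\delta(\al)=(1-\al)\delta_0$), I would set
\[ \phi_\al = \phi^{(\delta(\al))} + K\,\delta(\al), \qquad \O_\al=\{(x',x_n):x_n>\phi_\al(x')\}. \]
The vertical shift $K\delta$ is the device that makes the family both lie inside $\O$ and be monotone.

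First I would verify item 1). Each $\phi_\al$ is $C^\infty$, and since convolution with a probability density and addition of a constant do not increase the Lipschitz seminorm, $\phi_\al$ is Lipschitz with the same constant $K$; thus $\O_\al$ is a $C^\infty$ special Lipschitz domain with constant $K$. As $\phi$ is $K$-Lipschitz, $|\phi^{(\delta)}-\phi|\le K\delta$, whence $\phi_\al\ge\phi$ (so $\O_\al\subset\O$) and $|\phi_\al-\phi|\le 2K\delta\to0$ uniformly, giving $\bigcup_\al\O_\al=\O$. Convexity on $\{|x'|<\rho\}$ follows because for $|x'|<\rho$ and $\delta<\rho$ the mollification averages translates $\phi(\,\cdot-\delta z\,)$ whose arguments stay in $\{|\cdot|<2\rho\}$, where $\phi$ is convex; an average of convex functions is convex, and the constant $K\delta$ does not affect convexity.

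The one genuinely delicate point, which I regard as the main obstacle, is the ascending property, since mollifications need not be monotone in the smoothing parameter. For $\delta_2>\delta_1$ the $K$-Lipschitz bound gives, for every $x'$,
\[ \phi^{(\delta_2)}(x')-\phi^{(\delta_1)}(x')=\int\bigl(\phi(x'-\delta_2 z)-\phi(x'-\delta_1 z)\bigr)\omega(z)\,dz\ge -K(\delta_2-\delta_1)\int|z|\,\omega(z)\,dz, \]
and since $\supp\omega\subset B_1(0)$ we have $\int|z|\,\omega(z)\,dz\le1$. Hence $\delta\mapsto\phi^{(\delta)}+K\delta$ is non-decreasing, so $\al\mapsto\phi_\al$ is non-increasing and the domains $\O_\al$ increase with $\al$. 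It is precisely to dominate the oscillation of the mollification in $\delta$ by the linear term $K\delta$ that the shift is taken with coefficient $K$.

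For item 2) I would flatten each $\O_\al$ by the shear $\Phi_\al(x',x_n)=(x',x_n-\phi_\al(x'))$, a bijection of $\R^n$ with unit Jacobian carrying $\O_\al$ onto the half-space $\{y_n>0\}$; since $\|\n\phi_\al\|_\infty\le K$, both $\Phi_\al$ and $\Phi_\al^{-1}$ have gradients bounded in terms of $K$ only. Transplanting $u$, extending by even reflection across $\{y_n=0\}$ (which doubles the squared gradient and keeps the function in $W_2^1$), and transplanting back defines a linear operator $P_\al$ that restricts to $u$ on $\O_\al$. Because $\Phi_\al$ is a shear, the chain rule produces only combinations of first derivatives of $u$ with coefficients controlled by $K$, so $\|\n(P_\al u)\|_{L_2(\R^n)}\le C_1(K,n)\|\n u\|_{L_2(\O_\al)}$ and likewise $\|P_\al\|\le C_1$, uniformly in $\al$. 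Finally item 3) is immediate from item 2): extending $u$ and applying the Euclidean Sobolev inequality $\|w\|_{L_{2n/(n-2)}(\R^n)}\le C_n\|\n w\|_{L_2(\R^n)}$ to $w=P_\al u$, then restricting to $\O_\al$, yields the claim with $C_2=C_nC_1$, a constant independent of $\al$.
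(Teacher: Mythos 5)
Your proposal is correct and follows essentially the same route as the paper: mollify the convex graph function and shift it upward by a term proportional to the smoothing parameter (with coefficient tied to $K\int|z|\,\omega(z)\,dz$) so that the smoothed domains sit inside $\O$, increase monotonically, exhaust $\O$, and remain convex near the origin as averages of convex translates, while the uniform Lipschitz constant $K$ yields a uniformly bounded extension operator and hence the uniform Sobolev inequality. The only differences are cosmetic: the paper keeps a fixed kernel supported in $B_\rho$ and proves monotonicity by differentiating in $\al$ rather than by your finite-difference estimate, and it quotes Stein's extension theorem where you construct the extension explicitly by shear and even reflection (also note the harmless normalization slip $\de^{-n}$ in place of $\de^{-(n-1)}$ for a mollifier on $\R^{n-1}$).
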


\begin{rem}
Theorem \ref{t22} is taken from \cite{Mi}, where its analogue in 
more general case of special Lipschitz domains satisfying EBC was obtained. 
The case of convex domains is sufficient for our purposes.
To give a complete picture, we provide the corresponding proof.
\end{rem}

\begin{proof}
Take 
$$
\oo \in C_0^\infty (\R^{n-1}), 
\quad \supp \oo \subset B_\rho (0),
\quad \oo(x') \ge 0, 
\quad \int_{\R^{n-1}} \oo(x')\, dx' = 1.
$$
Define functions $\phi_\al$ as mollifications of functions $\phi$, shifted by a constant:
\begin{equation}
\label{27}
\phi_\al (x') = M \al + 
\int_{\R^{n-1}} \oo(z') \ph (x'-\al z')\, dz' ,
\end{equation}
where 
$M > K \int_{\R^{n-1}} \oo(z') |z'|\, dz'$.
Then
\begin{equation}
\label{28}
\frac{\dd\phi_\al (x')}{\dd\al} = 
M - \int_{\R^{n-1}} \oo(z') \<z', \n\ph (x'-\al z')\> dz' \ge 
M - K \int_{\R^{n-1}} \oo(z') |z'|\, dz' > 0 .
\end{equation}
We define the domains $\O_\al$ by the formula \eqref{26}.

1) It is clear that
\begin{equation}
\label{29}
\left|\phi_\al(x_1') - \phi_\al (x_2')\right| =
\left|\int_{\R^{n-1}} \oo(z') 
\left(\phi (x_1'-\al z') - \phi (x_2'-\al z')\right)dz' \right| 
\le K |x'_1-x'_2| .
\end{equation}
It follows from definition \eqref{27} that
\begin{eqnarray*}
\phi_\al (x'+y') + \phi_\al (x'-y') - 2\phi_\al (x') \\
= \int_{\R^{n-1}} \oo(z') 
\left(\phi(x'+y'-\al z') + \phi(x'-y'-\al z') - 2\phi(x'-\al z')\right) dz' 
\ge 0 
\end{eqnarray*}
for all $x', y'$, such that $x' \pm y' \in B_\rho$,
i.e. functions $\phi_\al$ are convex in $B_\rho$.

By \eqref{28} and convergence
$\phi_\al (x') \mathop{\longrightarrow}\limits_{\al\to 0} \phi (x')$ we get 
that the domains $\O_\al$ are ascending and together they cover the whole $\O$,
$\mathop{\bigcup}\limits_{\al \in (0,1)} \O_\al = \O$.

2) It is well known (see \cite{St}) that for special Lipschitz domains
there exists an extension operator
$P_\al : W_2^1 (\O_\al) \to W_2^1 (\R^n)$,
$\|P_\al\| \le C_1$,
and the estimate 
$\|\n (P_\al u)\|_{L_2 (\R^n)} \le C_1 \|\n u\|_{L_2 (\O_\al)}$ 
holds with constant $C_1$ depending only on the Lipschitz constant 
of the domain $\O_\al$.
It remains to refer to \eqref{29}.

3) By the Sobolev inequality we have
$$
\|u\|_{L_{\frac{2n}{n-2}} (\O_\al)} \le 
\|P_\al u\|_{L_{\frac{2n}{n-2}} (\R^n)} \le 
C \|\n (P_\al u)\|_{L_2 (\R^n)} \le C C_1 \|\n u\|_{L_2 (\O_\al)}
\quad \forall \ u \in W_2^1 (\O_\al) . \quad \qedhere
$$
\end{proof}
%
\section{Auxiliary statements}
\subsection{Algebraic lemma}

\begin{lemma}
\label{l31}
Let $B$ be a self-adjoint $(3\times 3)$ matrix,
$$
0 < \be_0 \1 \le B \le \be_1 \1 \,,
$$ 
$U$ be an arbitrary $(3\times 3)$ matrix.
Then
\begin{equation}
\label{31}
\tr (B U \overline{B U}) + 
\be_1^2 \left( \tr (U U^*) - \tr (U \overline U) \right) \ge 
\be_0^2 \tr (U U^*) .
\end{equation}
\end{lemma}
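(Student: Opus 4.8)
The plan is to reduce \eqref{31}, by two normalizations, to an elementary estimate for $2\times 2$ quadratic forms. Throughout I use that the matrix $B$ is real (in the application $B=\er$ or $B=\mu$), so that $\overline B = B$. First I would split $U$ into its real and imaginary parts, $U = X + iY$ with $X,Y$ real. Expanding the three traces and using the cyclicity of the trace to cancel the imaginary cross terms, one obtains
\begin{equation*}
\tr(BU\overline{BU}) = \tr((BX)^2) + \tr((BY)^2), \qquad
\tr(UU^*) = \tr(XX^T) + \tr(YY^T), \qquad
\tr(U\overline U) = \tr(X^2) + \tr(Y^2).
\end{equation*}
Hence \eqref{31} decouples into the sum of the same inequality for $X$ and for $Y$, and it suffices to prove, for an arbitrary \emph{real} $(3\times 3)$ matrix $V$,
\begin{equation*}
\tr((BV)^2) + \be_1^2\left(\tr(VV^T) - \tr(V^2)\right) \ge \be_0^2\,\tr(VV^T).
\end{equation*}

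Next I would diagonalize $B = Q\Lambda Q^T$ with $Q$ orthogonal and $\Lambda = \operatorname{diag}(\la_1,\la_2,\la_3)$, $\be_0 \le \la_i \le \be_1$. Replacing $V$ by $Q^T V Q$ (again real; this leaves $\tr(VV^T)$ and $\tr(V^2)$ invariant and turns $\tr((BV)^2)$ into $\tr((\Lambda V)^2)$) reduces matters to $B = \Lambda$. In coordinates $V = (v_{ij})$ the three traces become $\sum_{i,j}\la_i\la_j v_{ij}v_{ji}$, $\sum_{i,j}v_{ij}^2$ and $\sum_{i,j}v_{ij}v_{ji}$, so that the difference (left-hand side minus right-hand side) equals
\begin{equation*}
\sum_{i,j}\la_i\la_j\, v_{ij}v_{ji} + \be_1^2\sum_{i,j}\left(v_{ij}^2 - v_{ij}v_{ji}\right) - \be_0^2\sum_{i,j}v_{ij}^2 ,
\end{equation*}
which I would group into the diagonal terms $i=j$ and the unordered off-diagonal pairs $\{i,j\}$. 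The diagonal part contributes $\sum_i(\la_i^2 - \be_0^2)v_{ii}^2 \ge 0$, since $\la_i \ge \be_0$.

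For a fixed pair $i\ne j$, writing $a = v_{ij}$, $b = v_{ji}$ and $t = \la_i\la_j$, the combined contribution of the entries $(i,j)$ and $(j,i)$ is the quadratic form
\begin{equation*}
(\be_1^2 - \be_0^2)(a^2 + b^2) - 2(\be_1^2 - t)\,ab ,
\end{equation*}
which is nonnegative for all real $a,b$ exactly when its matrix is positive semidefinite, i.e.\ when $|\be_1^2 - t| \le \be_1^2 - \be_0^2$; this holds because $\be_0^2 \le \la_i\la_j = t \le \be_1^2$. Summing over all pairs then proves the claim. The only real content is this last step: one must see that the $\be_1^2$-term is precisely what is needed to dominate the cross term $v_{ij}v_{ji}$, and that the resulting determinant condition is equivalent to the two-sided bound $\be_0^2 \le \la_i\la_j \le \be_1^2$ on products of eigenvalues. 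Everything else is the routine bookkeeping of expanding the traces in coordinates; the point to watch is the correct cancellation of the imaginary cross terms in the first reduction.
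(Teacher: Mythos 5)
Your core argument is the paper's argument: diagonalize $B$, pass to coordinates, note that the diagonal entries contribute $(\la_i^2-\be_0^2)|w_{ii}|^2\ge 0$, and for each unordered off-diagonal pair estimate the quadratic form $(\be_1^2-\be_0^2)\bigl(|w_{ij}|^2+|w_{ji}|^2\bigr)-2(\be_1^2-\la_i\la_j)\re\bigl(w_{ij}\overline{w_{ji}}\bigr)$ using $0\le\be_1^2-\la_i\la_j\le\be_1^2-\be_0^2$. Your positive-semidefiniteness criterion for the $2\times 2$ form is exactly the paper's bound $2\bigl|\re\bigl(w_{ij}\overline{w_{ji}}\bigr)\bigr|\le |w_{ij}|^2+|w_{ji}|^2$ combined with the two-sided bound on $\la_i\la_j$; all of this is correct.

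The one genuine defect is the opening reduction $U=X+iY$. It relies on $\overline{B}=B$, i.e.\ on $B$ having real entries, whereas the lemma is stated for a self-adjoint, hence possibly complex Hermitian, matrix $B$ (the statement keeps $\overline{B}$ inside $\overline{BU}$ precisely because it does not assume $B$ real). For complex Hermitian $B$ the cross term in $\tr\bigl(BU\overline{B}\,\overline{U}\bigr)$ equals $-2\im\tr\bigl(BY\overline{B}X\bigr)$, which does not vanish in general, so the three traces do not decouple into their $X$- and $Y$-parts and your reduction to real $V$ fails. Condition \eqref{01} does make $\er$ and $\mu$ real in the application, so nothing downstream in the paper would break, but as a proof of the lemma as stated your argument is incomplete. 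The fix is also a simplification: drop the real/imaginary splitting altogether and run your coordinate computation directly for the complex matrix $W=O^*U\overline{O}$ with $O$ unitary, replacing $v_{ij}v_{ji}$ by $\re\bigl(w_{ij}\overline{w_{ji}}\bigr)$ and $v_{ij}^2$ by $|w_{ij}|^2$; the same determinant condition on the $2\times 2$ form closes the argument. That is precisely what the paper does.
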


\begin{proof}
The matrix $B$ can be diagonalized, $B = O\Lambda O^*$, 
where $O$ is the unitary matrix,
$$
\L = \operatorname{diag} (\la_1, \la_2, \la_3), \qquad 
\be_1 \ge \la_1 \ge \la_2 \ge \la_3 \ge \be_0 .
$$
Put $W=O^* U \overline{O}$. 
Then \eqref{31} is equivalent to the inequality
$$
\tr(\Lambda W \overline{\Lambda W}) +
\be_1^2 \left(\tr(WW^*)-\tr(W\overline{W})\right) 
\ge \be_0^2 \tr(WW^*)
$$
or
$$
\sum_{i,j} \left( \la_i \la_j w_{ij} \overline{w_{ji}} 
+ \be_1^2 (w_{ij} \overline{w_{ij}} - w_{ij} \overline{w_{ji}}) \right)
\ge \be_0^2 \sum_{i,j} w_{ij} \overline{w_{ij}} .
$$
Since $\la_i \ge \be_0$, it is sufficient to consider the terms with $i \neq j$.
The corresponding inequality "splits" 
into three independent inequalities:
\begin{equation}
\label{32}
\la_1 \la_2 (w_{12} \overline{w_{21}} + w_{21} \overline{w_{12}}) +
\be_1^2 \left(|w_{12}|^2 + |w_{21}|^2 - 
w_{12} \overline{w_{21}} - w_{21} \overline{w_{12}}\right) \ge 
\be_0^2 \left(|w_{12}|^2 + |w_{21}|^2\right)
\end{equation}
and the same inequality for pairs of indices $\{1,3\}$ and $\{2,3\}$.
Clearly, the left hand side of \eqref{32} 
can be estimated from below in the following way:
\begin{eqnarray*}
\be_1^2 \left(|w_{12}|^2 + |w_{21}|^2\right) - 
2 (\be_1^2 - \la_1 \la_2) \re (w_{12} \overline{w_{21}})\\
\ge \be_1^2 \left(|w_{12}|^2 + |w_{21}|^2\right) - 
(\be_1^2 - \la_1 \la_2) \left(|w_{12}|^2 + |w_{21}|^2\right) 
\ge \be_0^2 \left(|w_{12}|^2 + |w_{21}|^2\right) . \quad \qedhere
\end{eqnarray*}
\end{proof}

\subsection{Estimates of minor terms}
\begin{lemma}
\label{l32}
Let $\O \subset \R^3$ be a special Lipschitz domain,
the set $\O \cap \{ x = (x',x_n) : |x'| < 2\rho \}$ be convex,
$0\in \dd\O$.
Let $v$ be a vector-function with compact support, $v\in W_2^1 (\O, \R^3)$, 
$\supp v \subset \overline\O \cap B_\rho (0)$.
Let $s$ be a self-adjoint $(3\times 3)$-matrix-function,
$s \in W_3^1 (\O \cap B_\rho)$ and 
$$
0 < \be_0 \1 \le s \le \be_1 \1 \,.
$$ 
Take $\de > 0$.
Then 
\begin{equation}
\label{33}
\int_{\O_\al\cap B_\rho} |\dd_i s_{jk}|^2 |v_l|^2 dx \le 
\de \|\n v\|_{L_2 (\O_\al)}^2 + C(\de, \rho, s) \|v\|_{L_2 (\O_\al)}^2
\end{equation}
and 
\begin{equation}
\label{34}
\int_{\O_\al\cap B_\rho} |s_{ij} \dd_k s_{lm} v_n \dd_p v_q|\, dx 
\le \de \|\n v\|_{L_2 (\O_\al)}^2 + C(\de, \rho, s) \|v\|_{L_2 (\O_\al)}^2 ,
\end{equation}
where $\{\O_\al\}$ is the collection of domains, 
constructed in Theorem \ref{t22};
the constant $C(\de, \rho, s)$ is independent on $\al$ and $v$.
\end{lemma}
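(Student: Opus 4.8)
The plan is to establish \eqref{33} first and then deduce \eqref{34} from it by Young's inequality. Throughout I would use that $\O_\al \cap B_\rho \subset \O \cap B_\rho$, so that $s$ and $\n s$ are defined on the domain of integration, and that the Sobolev constant $C_2$ from part 3 of Theorem \ref{t22} does not depend on $\al$.

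For \eqref{33} I would first bound the fixed-index integrand pointwise, $|\dd_i s_{jk}|^2 |v_l|^2 \le |\n s|^2 |v|^2$, and then apply H\"older's inequality with exponents $3/2$ and $3$ together with the Sobolev embedding of Theorem \ref{t22}: since $\n s \in L_3$ and $v \in L_6 (\O_\al)$,
\begin{equation*}
\int_{\O_\al \cap B_\rho} |\n s|^2 |v|^2\, dx \le
\|\n s\|_{L_3 (\O_\al \cap B_\rho)}^2\, \|v\|_{L_6 (\O_\al)}^2 \le
C_2^2\, \|\n s\|_{L_3 (\O \cap B_\rho)}^2\, \|\n v\|_{L_2 (\O_\al)}^2 .
\end{equation*}
This already has the correct $\al$-uniform structure, but with a \emph{fixed} constant in front of $\|\n v\|_{L_2}^2$ rather than an arbitrarily small $\de$.

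The gain of the small factor $\de$ is the only delicate point, and I would obtain it from the absolute continuity of the integral $\int_{\O \cap B_\rho} |\n s|^3\, dx$. Splitting $\n s$ by the level set $\{|\n s| > N\}$, the high part satisfies $\|\n s\, \chi_{\{|\n s| > N\}}\|_{L_3 (\O \cap B_\rho)} \to 0$ as $N \to \infty$, so for $N$ large its contribution is at most $\de \|\n v\|_{L_2 (\O_\al)}^2$; on the low part $|\n s| \le N$, whence that contribution is at most $N^2 \|v\|_{L_2 (\O_\al)}^2$, which supplies the term $C(\de, \rho, s) \|v\|_{L_2}^2$. The crucial point is that the truncation is performed on the fixed, $\al$-independent set $\O \cap B_\rho$, so the threshold $N$, and hence $C(\de, \rho, s)$, is independent of $\al$; this, combined with the $\al$-uniform constant $C_2$, is what makes \eqref{33} uniform in $\al$.

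For \eqref{34} I would use the boundedness $s \le \be_1 \1$, giving $|s_{ij}| \le \be_1$, so the integrand is dominated by $\be_1 |\n s|\, |v|\, |\n v|$. Peeling off the $|\n v|$ factor by Young's inequality, $|\n s|\,|v|\,|\n v| \le \tfrac{\epsilon}{2} |\n v|^2 + \tfrac{1}{2\epsilon} |\n s|^2 |v|^2$, the first term contributes $\tfrac{\be_1 \epsilon}{2} \|\n v\|_{L_2}^2$, which is absorbed into $\tfrac{\de}{2} \|\n v\|_{L_2}^2$ for $\epsilon$ small, while the second term is precisely of the form bounded in \eqref{33}. Applying \eqref{33} with a sufficiently small parameter keeps the total coefficient of $\|\n v\|_{L_2}^2$ below $\de$ and produces the remaining $C(\de, \rho, s) \|v\|_{L_2}^2$ term. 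The main obstacle is thus entirely contained in the third paragraph, namely securing the arbitrarily small $\de$ together with $\al$-uniformity via the truncation of $\n s$ on the $\al$-independent set; the rest is H\"older and Young.
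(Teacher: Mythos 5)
Your proposal is correct and follows essentially the same route as the paper: the paper also splits $\dd_i s_{jk}$ into a bounded part plus a part with small $L_3$-norm on the fixed set $\O\cap B_\rho$, applies H\"older together with the $\al$-uniform Sobolev embedding from Theorem \ref{t22}, and then deduces \eqref{34} from the resulting estimate by Cauchy--Schwarz and absorption. Your explicit level-set truncation and the remark on $\al$-independence of the threshold are just a spelled-out version of the paper's decomposition $\dd_i s_{jk}=\nu_1+\nu_2$.
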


\begin{rem}
These estimates are well known. 
To give a complete picture, we provide the proof.
\end{rem}

\begin{proof}
For any $\de_0 > 0$, the function $\dd_i s_{jk} \in L_3 (\O \cap B_\rho)$
can be represented in the form
$\dd_i s_{jk} = \nu_1 + \nu_2$, where $\nu_1 \in L_\infty (\O \cap B_\rho)$,
$\nu_2 \in L_3 (\O \cap B_\rho)$, and
$\|\nu_2\|_{L_3 (\O \cap B_\rho)} \le \de_0$.
Therefore,
\begin{equation}
\label{35}
\|\dd_i s_{jk} v_l\|_{L_2 (\O_\al)} \le 
\|\nu_1\|_{L_\infty (\O \cap B_\rho)} \|v_l\|_{L_2 (\O_\al)} +
\de_0 \|v_l\|_{L_6 (\O_\al)} \le 
C \de_0 \|\n v_l\|_{L_2 (\O_\al)} + C (\de_0, \rho, s) \|v_l\|_{L_2 (\O_\al)} ,
\end{equation}
where we used the statement 3) of the Theorem \ref{t22}.
Thus, \eqref{33} is proved.
Furthermore, in view of \eqref{35} we have the estimate
\begin{eqnarray*}
\int_{\O_\al\cap B_\rho} |s_{ij} \dd_k s_{lm} v_n \dd_p v_q|\, dx 
\le \be_1 \|\dd_k s_{lm} v_n\|_{L_2 (\O_\al)} \|\dd_p v_q\|_{L_2 (\O_\al)} \\
\le \be_1 \left(C \de_0 \|\n v_n\|_{L_2 (\O_\al)} 
+ C (\de_0, \rho, s) \|v_n\|_{L_2 (\O_\al)}\right) \|\n v_q\|_{L_2 (\O_\al)} ,
\end{eqnarray*}
which implies \eqref{34}.
\end{proof}

From lemmas \ref{l31} and \ref{l32} one can deduce

\begin{lemma}
\label{l33}
Let $\O \subset \R^3$ be a special Lipschitz domain, the set 
$$
\O \cap \{ x = (x',x_n) : |x'| < 2\rho \}
$$ 
is convex, $0\in \dd\O$.
Let $v$ be a vector-function with compact support, 
$$
v\in W_2^1 (\O, \R^3), \quad \supp v \subset \overline\O \cap B_\rho (0) .
$$
Let $s$ be a self-adjoint $(3\times 3)$ matrix-function,
$s \in W_3^1 (\O \cap B_\rho)$ and
$$
0 < \be_0 \1 \le s \le \be_1 \1 \,.
$$ 
Then 
$$
\int_{\O_\al\cap B_\rho} 
\left(|\n (sv)|^2 - |\rot (sv)|^2 + \be_1^2 |\rot v|^2\right) dx
\ge \frac{\be_0^2}2 \int_{\O_\al} |\n v|^2 dx - 
C(\rho, s) \int_{\O_\al} |v|^2 dx,
$$
where $\{\O_\al\}$ is the collection of the domains, 
constructed in Theorem \ref{t22};
the constant $C(\rho, s)$ is independent on $\al$ and $v$.
\end{lemma}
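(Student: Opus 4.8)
The plan is to reduce the integral inequality to a pointwise algebraic one and then invoke Lemmas \ref{l31} and \ref{l32}, so that no integration by parts (and hence no boundary term) is needed. First I would set $w = sv$ and record the elementary identity, valid for any vector field $w$ in $\R^3$ with $(\n w)_{jk} = \dd_k w_j$,
$$
|\n w|^2 - |\rot w|^2 = \sum_{j,k}\dd_k w_j\,\dd_j w_k = \tr\bigl((\n w)^2\bigr),
$$
which follows by expanding $|\rot w|^2$ and cancelling the diagonal terms. Since $s\in W_3^1\cap L_\infty$ and $v\in W_2^1$, the product $w=sv$ lies in $W_2^1(\O\cap B_\rho)$, so the identity holds for a.e.\ $x$ and all quantities involved are in $L_1$.

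Next I would isolate the principal part. Writing $\n(sv) = s\,\n v + R$ with the remainder $R_{jk} = \sum_l (\dd_k s_{jl})v_l$, the identity expands, by cyclicity of the trace, into
$$
|\n(sv)|^2 - |\rot(sv)|^2 = \tr\bigl((s\,\n v)^2\bigr) + 2\,\tr\bigl((s\,\n v)R\bigr) + \tr(R^2).
$$
Now I apply Lemma \ref{l31} pointwise with $B = s(x)$, which is self-adjoint with $\be_0\1\le s\le\be_1\1$, and $U = \n v(x)$. As $s$, $v$ are real, $\tr(UU^*)=|\n v|^2$ and $\tr(UU^*)-\tr(U\overline U)=|\rot v|^2$, while $\tr(BU\overline{BU})=\tr((s\,\n v)^2)$; the lemma therefore gives, for a.e.\ $x$,
$$
\tr\bigl((s\,\n v)^2\bigr) + \be_1^2|\rot v|^2 \ge \be_0^2|\n v|^2 .
$$
Combining the last two displays yields the pointwise lower bound
$$
|\n(sv)|^2 - |\rot(sv)|^2 + \be_1^2|\rot v|^2 \ge \be_0^2|\n v|^2 + 2\,\tr\bigl((s\,\n v)R\bigr) + \tr(R^2).
$$

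Finally I would integrate over $\O_\al\cap B_\rho$ and absorb the two error terms. The entries of $\tr(R^2)$ are products $(\dd s)\,v\,(\dd s)\,v$, so its integral is dominated by a sum of terms $\int_{\O_\al\cap B_\rho}|\dd_i s_{jk}|^2|v_l|^2\,dx$, controlled by \eqref{33}; the entries of $\tr((s\,\n v)R)$ are products $s\,(\dd v)\,(\dd s)\,v$, so its integral is dominated by a sum of terms $\int_{\O_\al\cap B_\rho}|s_{ij}\dd_k s_{lm}v_n\dd_p v_q|\,dx$, controlled by \eqref{34}. Applying Lemma \ref{l32} with $\de=\be_0^2/4$ to each gives
$$
\int_{\O_\al\cap B_\rho}\bigl(2\,\tr((s\,\n v)R)+\tr(R^2)\bigr)\,dx \ge -\tfrac{\be_0^2}{2}\|\n v\|_{L_2(\O_\al)}^2 - C(\rho,s)\|v\|_{L_2(\O_\al)}^2,
$$
and since $\supp v\subset B_\rho$ the principal term $\be_0^2\int|\n v|^2$ may be taken over all of $\O_\al$. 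Subtracting the absorbed half of $\be_0^2\|\n v\|^2$ leaves exactly the asserted bound, with $C(\rho,s)$ independent of $\al$ because the constants in Lemma \ref{l32}, hence those of Theorem \ref{t22}, are.

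The only genuine difficulty is bookkeeping: one must verify that the cross term and the remainder produced by the non-constancy of $s$ have precisely the structure of the left-hand sides of \eqref{33} and \eqref{34}, so that Lemma \ref{l32} applies, and that the total gradient mass absorbed (twice $\de$) stays $\le\be_0^2/2$. The analytic substance — the pointwise matrix inequality and the uniform-in-$\al$ control of the lower-order terms — is already carried by the two preceding lemmas; convexity of $\O$ enters this lemma only indirectly, through the uniform extension and Sobolev constants of Theorem \ref{t22} on which Lemma \ref{l32} depends.
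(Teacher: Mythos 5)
Your proof is correct and follows essentially the same route as the paper: both identify $|\n(sv)|^2-|\rot(sv)|^2$ with the trace $\tr\bigl(\n(sv)\,\overline{\n(sv)}\bigr)$, apply the algebraic Lemma \ref{l31} pointwise to the principal part $s\,\n v$ (all derivatives falling on $v$), and absorb the remaining cross and quadratic-in-$\dd s$ terms uniformly in $\al$ via Lemma \ref{l32}. Your write-up merely makes the decomposition $\n(sv)=s\,\n v+R$ and the bookkeeping of the minor terms more explicit than the paper does.
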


\begin{proof}
We have
$$
|\n v|^2 = \tr (U U^*) , \quad 
|\n v|^2 - |\rot v|^2 = \tr (U \overline U), \quad 
\text{where} \ U_{kj}= \dd_j v_k .
$$
Therefore,
$$
|\n (sv)|^2 - |\rot (sv)|^2 + \be_1^2 |\rot v|^2 = 
\tr (\tilde U \overline{\tilde U}) +
\be_1^2 \left( \tr (U U^*) - \tr (U \overline U) \right) ,
$$
where $\tilde U_{ij} = \dd_j (s_{ik} v_k)$.
So, the main terms on the left hand side (when all derivatives fall on $v$) 
are estimated by Lemma \ref{l31}, applied to matrices 
$B=s$, $U_{kj}= \dd_j v_k$,
\begin{eqnarray*}
\int_{\O_\al\cap B_\rho} \left(\tr (B U \overline{B U}) + 
\be_1^2 \left( \tr (U U^*) - \tr (U \overline U) \right) \right) dx \ge 
\be_0^2 \int_{\O_\al} |\n v|^2 dx .
\end{eqnarray*}
The minor terms do not exceed
$\frac{\be_0^2}2 \|\n v\|_{L_2(\O_\al)}^2 
+ C(\rho, s) \|v\|_{L_2(\O_\al)}^2$ 
by virtue of Lemma \ref{l32}.
\end{proof}

\subsection{Density of smooth functions}
We will need to approximate functions from Sobolev space, 
satisfying tangent or normal boun\-da\-ry condition, 
by smooth functions with the same boundary condition.
We give the proof of this fact for the convenience of the reader.

\begin{lemma}
\label{l335}
Let $\O\subset \R^3$ be a special Lipschitz domain, 
\begin{equation*}
\O = \{x \in \R^3 : x_3 > \phi (x_1,x_2) \} ,
\end{equation*}
and $\phi \in C^3 (\R^2)$, $\phi (0) = 0$.
Then

a) for every $u\in W_2^1 (\O, \tau)$, $\supp u \subset B_\rho$, 
there exists a sequence of functions \\
$u^k\in W_2^1 (\O, \tau) \cap C^2(\overline\O, \R^3),$ 
such that $\supp u^k \subset B_\rho$, $u^k\mathop{\longrightarrow}\limits_{k\rightarrow\infty}^{W_2^1}u$;

b) for every $u\in W_2^1 (\O,\1, \nu)$, $\supp u \subset B_\rho$, there exists a sequence of functions \\
$u^k\in W_2^1 (\O,\1, \nu) \cap C^2(\overline\O, \R^3),$ 
such that $\supp u^k \subset B_\rho$, $u^k\mathop{\longrightarrow}\limits_{k\rightarrow\infty}^{W_2^1}u$.
\end{lemma}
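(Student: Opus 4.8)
The plan is to flatten the boundary by a single $C^3$ change of variables, transport the field by the rule under which the relevant differential operator ($\rot$ for part a), $\div$ for part b)) is natural, thereby turning the weak boundary conditions of Definitions \ref{o11} and \ref{o12} into flat conditions on a half-space, and finally to approximate in the half-space by reflection and mollification, after which everything is pulled back.

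First I would introduce the shear diffeomorphism $\Psi(x) = (x_1, x_2, x_3 - \phi(x_1,x_2))$, which maps $\O$ onto the half-space $\R^3_+ = \{y_3 > 0\}$, carries $\dd\O$ onto $\{y_3 = 0\}$, and satisfies $\det D\Psi \equiv 1$ with $D\Psi, D\Psi^{-1} \in C^2$ (this is where $\phi\in C^3$ is used). For part a) I would transport $u$ covariantly, $\tilde u = (D\Psi^{-1})^{\top}\,(u\circ\Psi^{-1})$, i.e. as a $1$-form; since $\det D\Psi\equiv1$ this transport intertwines $\rot$ up to an explicit $C^2$ matrix factor, so $\tilde u\in W_2^1(\R^3_+)$, and substituting the change of variables into the identity of Definition \ref{o11} shows that $\left.u_\tau\right|_{\dd\O}=0$ is equivalent to the flat tangential condition $\tilde u_1 = \tilde u_2 = 0$ on $\{y_3=0\}$ (indeed $\tilde u_1 = u_1 + \dd_1\phi\,u_3$, $\tilde u_2 = u_2 + \dd_2\phi\,u_3$ are exactly the pairings of $u$ with the tangent vectors of the graph). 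For part b) I would instead transport $u$ contravariantly, $\tilde u = (D\Psi)\,(u\circ\Psi^{-1})$, i.e. as a vector field; again $\det D\Psi\equiv1$ makes this intertwine $\div$, and Definition \ref{o12} becomes the flat normal condition $\tilde u_3 = 0$ on $\{y_3=0\}$.

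In the half-space the approximation is classical. In case a) I would extend $\tilde u_1,\tilde u_2$ by odd reflection across $\{y_3=0\}$ and $\tilde u_3$ by even reflection; in case b) I would reflect $\tilde u_3$ oddly and $\tilde u_1,\tilde u_2$ evenly. The odd reflection lands in $W_2^1$ of a full neighborhood of $\{y_3=0\}$ precisely because the reflected component vanishes on $\{y_3=0\}$ (this is the only place the boundary condition enters the extension). Mollifying with a kernel even in $y_3$ produces $C^\infty(\overline{\R^3_+})$ fields, and parity forces the relevant components to stay identically zero on $\{y_3=0\}$, so the flat condition survives mollification. Restricting to $\R^3_+$ and transporting back by $\Psi$ gives $C^2(\overline\O)$ fields (one derivative is lost to $D\Psi\in C^2$, which is why only $C^2$-regularity is claimed) satisfying the original weak condition. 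Multiplying by a fixed scalar cutoff $\chi\in C_0^\infty(B_\rho)$ with $\chi\equiv1$ on $\supp u$ restores $\supp u^k\subset B_\rho$ without spoiling the boundary condition, since $(\chi u)_\tau=\chi u_\tau$ and $(\chi u)_\nu=\chi u_\nu$; the convergence $u^k\to u$ in $W_2^1$ is the standard convergence of reflection-mollification, transported through the bounded isomorphism $\Psi$.

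The main obstacle I anticipate is the second paragraph: verifying at the level of the weak integral identities that the covariant transport intertwines $\rot$ and preserves Definition \ref{o11}, while the contravariant transport intertwines $\div$ and preserves Definition \ref{o12}. Concretely, one substitutes $y=\Psi(x)$ into $\int_\O\langle w,\rot h\rangle\,dx = \int_\O\langle\rot w,h\rangle\,dx$ and into $\int_\O\langle w,\n\ph\rangle\,dx = -\int_\O\div w\,\overline\ph\,dx$, chooses the test objects $h$ and $\ph$ as the matching transports of arbitrary half-space test functions, and checks that the Jacobian factors cancel (using $\det D\Psi\equiv1$), so that the flat identities characterizing $\tilde u_1=\tilde u_2=0$, respectively $\tilde u_3=0$, are recovered. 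Everything else—the reflection, the mollification, and the cutoff—is routine.
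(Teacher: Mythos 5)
Your argument is correct and rests on the same key idea as the paper's proof: pass to a frame adapted to the graph of $\phi$ in which the boundary condition decouples into zero-trace conditions on individual scalar components. The paper, however, does this without flattening the boundary: it multiplies $u$ by a single matrix $M(x)$ whose first two rows are the tangent vectors $\tau_1=(1,0,\dd_1\phi)$, $\tau_2=(0,1,\dd_2\phi)$ and whose third row is $(-\dd_1\phi,-\dd_2\phi,1)$, proportional to the normal; then $\left.u_\tau\right|_{\dd\O}=0$ becomes $\left.v_1\right|_{\dd\O}=\left.v_2\right|_{\dd\O}=0$ and $\left.u_\nu\right|_{\dd\O}=0$ becomes $\left.v_3\right|_{\dd\O}=0$ for $v=Mu$, after which the paper simply invokes the standard density of compactly supported smooth functions in the zero-trace subspace of $W_2^1$ and of $C^2(\overline\O)$-functions in $W_2^1$, and sets $u^k=M^{-1}v^k$. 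Your two transports produce exactly the same algebra --- $(D\Psi^{-1})^{\top}$ reproduces rows $1,2$ of $M$ and $D\Psi$ reproduces row $3$ --- but you additionally straighten the boundary and reprove the scalar density statements by hand via odd/even reflection and mollification. What your route buys is self-containedness (no appeal to scalar trace and density theorems); what it costs is the extra bookkeeping of verifying that the transports intertwine $\rot$ and $\div$ at the level of the weak integral identities, which is not actually needed here: since $u\in W_2^1$ and $\dd\O$ is $C^3$, the conditions of Definitions \ref{o11} and \ref{o12} are equivalent to trace conditions (see the Remark following Definition \ref{o12}), so it suffices to track traces of components rather than the identities themselves. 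Both proofs lose one derivative to $\n\phi\in C^2$, which is why only $C^2(\overline\O)$ approximants are claimed.
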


\begin{proof} 
Let us consider the matrix-function
$$
M(x) = \left(\begin{array}{ccc} 
1&0&\dd_1 \phi (x_1, x_2) \\ 
0&1& \dd_{2} \phi (x_1, x_2) \\
-\dd_1 \phi (x_1, x_2) &-\dd_2 \phi (x_1, x_2) & 1 \end{array}\right) .
$$
This matrix is nondegenerate, $\det M=1+|\n\phi|^2\neq 0$.
Take $u\in W_2^1 (\O,\C^3),$ $\supp u \subset B_\rho$. 
Denote $v=Mu\in W_2^1 (\O,\C^3),$ $\supp v \subset B_\rho$. 
Then
$$
\left.u_\tau\right|_{\dd\O}=0 \Leftrightarrow \left.v_1\right|_{\dd\O}=\left.v_2\right|_{\dd\O}=0,
$$ 
$$
\left.u_\nu\right|_{\dd\O}=0 \Leftrightarrow \left.v_3\right|_{\dd\O}=0.
$$ 
We will consider electric case. 
Magnetic case is treated in a similar way. 
We can approximate $v_1, v_2$ by
$v_1^k, v_2^k \in C_0^\infty(B_\rho)$, and 
$v_3$ by $v_3^k\in C^2(\overline\O)$, $\supp v \subset B_\rho$. 
Then
$$
u^k=M^{-1}v^k\in W_2^1 (\O, \tau) \cap C^2(\overline\O, \R^3), \quad u^k\mathop{\longrightarrow}\limits_{k\rightarrow\infty}^{W_2^1}u.\quad
\qedhere$$ 
\end{proof}

\section{A priori estimates}
\subsection{Magnetic fields}
\begin{lemma}
\label{l34}
Let $\O\subset \R^3$ be a special Lipschitz domain, 
\begin{equation*}
\O = \{x \in \R^3 : x_3 > \phi (x_1,x_2) \} ,
\end{equation*}
and $\phi \in C^3 (\R^2)$.
Let $w \in W_2^1 (\O)$, $\left. w_\nu\right|_{\dd\O} = 0$.
Then the integration by part formula
$$
\int_\O \left( |\rot w|^2 + |\div w|^2 \right) dx = 
\int_\O |\n w|^2 dx + \int_{\dd\O} \< Aw, w \> dS 
$$
holds, where
\begin{equation}
\label{36}
A (x_1, x_2, \phi(x_1, x_2)) = \frac1{\sqrt{1+|\n\phi(x)|^2}} 
\left( \begin{array}{ccc}
\dd_1^2 \phi (x) & \dd_1 \dd_2 \phi (x) & 0 \\
\dd_1 \dd_2 \phi (x) & \dd_2^2 \phi (x) & 0 \\
0 & 0 & 0 \end{array} \right) 
\end{equation}
is the Weingarten mapping (see \cite{BZ}).
\end{lemma}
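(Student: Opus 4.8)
The plan is to establish the identity first for smooth fields $w\in C^2(\overline\O)$ satisfying $\<w,n\>|_{\dd\O}=0$ (here $n$ denotes the outward unit normal to $\dd\O$), and then to remove the extra regularity by approximation. This two-step scheme is essentially forced on us: the natural ``intermediate'' vector field below contains first derivatives of $w$, which possess no boundary traces when $w\in W_2^1(\O)$ only, so the divergence theorem cannot be applied to it directly at the level of $W_2^1$.

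The computational core is the pointwise identity
$$
|\rot w|^2+|\div w|^2-|\n w|^2=\div V,\qquad V:=w\,\div w-(w\cdot\n)w,
$$
i.e. $V_i=w_i\,\div w-\sum_j w_j\dd_j w_i$. Writing $U_{ij}=\dd_j w_i$, the left-hand side equals $\sum_{i,j}(\dd_i w_i\,\dd_j w_j-\dd_i w_j\,\dd_j w_i)$, and in expanding $\div V$ the second-order terms cancel after relabelling $i\leftrightarrow j$, leaving exactly this expression. Integrating over $\O$ and using the divergence theorem (legitimate for smooth compactly supported $w$) gives
$$
\int_\O\l(|\rot w|^2+|\div w|^2\r)\,dx=\int_\O|\n w|^2\,dx+\int_{\dd\O}\<V,n\>\,dS,
$$
so the whole matter reduces to identifying $\<V,n\>$ on $\dd\O$ with $\<Aw,w\>$.

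Since $\div w$ is continuous up to $\dd\O$ and $\<w,n\>=0$ there, the term $(\div w)\<w,n\>$ drops out and $\<V,n\>=-\<(w\cdot\n)w,n\>$ on $\dd\O$. I would extend $n$ to a neighbourhood by the ($x_3$-independent) formula $n=(\dd_1\phi,\dd_2\phi,-1)/g$, $g=\sqrt{1+|\n\phi|^2}$, and use
$$
\<(w\cdot\n)w,n\>=\sum_j w_j\,\dd_j\<w,n\>-\sum_{i,j}w_iw_j\,\dd_j n_i.
$$
On $\dd\O$ the field $w$ is tangent to the surface and $\<w,n\>\equiv0$ along it, so its derivative in the tangent direction $w$ vanishes, killing the first sum; hence $\<V,n\>=\sum_{i,j}w_iw_j\,\dd_j n_i$. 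The main step is then the geometric computation of $\dd_j n_i$: one finds $\dd_3 n_i=0$ and, for $j\in\{1,2\}$,
$$
\sum_i w_i\,\dd_j n_i=\frac1g\sum_{i\in\{1,2\}}w_i\,\dd_i\dd_j\phi-\frac{\dd_j g}{g}\,\<w,n\>,
$$
because $(\dd_1\phi,\dd_2\phi,-1)=g\,n$. The second term again vanishes by the boundary condition, and what remains is precisely $\<Aw,w\>$ with $A$ the Hessian block $\dd_i\dd_j\phi/g$ of \eqref{36}, i.e. the Weingarten map of $\dd\O$. This cancellation of every non-Hessian contribution against $\<w,n\>=0$ is the delicate point of the argument; the trace issue noted above is what makes the smooth-first route necessary.

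Finally, I would pass from $C^2(\overline\O)$ to $W_2^1(\O)$ by density. Every term in the asserted formula is continuous in the $W_2^1(\O)$-norm: the three volume integrals are quadratic in $\n w$, while the boundary integral is controlled by the trace inequality together with the boundedness of $A$ (which depends only on $\phi\in C^3$). By Lemma \ref{l335}b, any $w\in W_2^1(\O)$ with $\<w,n\>|_{\dd\O}=0$ and compact support is a $W_2^1$-limit of fields $w^k\in C^2(\overline\O)$ satisfying the same boundary condition; passing to the limit in the identity proved for each $w^k$ completes the proof.
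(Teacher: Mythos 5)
Your proposal is correct and follows essentially the same route as the paper: reduce to smooth fields via Lemma \ref{l335}b, integrate by parts (you package the two integrations as the single divergence identity $|\rot w|^2+|\div w|^2-|\n w|^2=\div\bigl(w\,\div w-(w\cdot\n)w\bigr)$, which is only a cosmetic repackaging), kill the $(\div w)\<w,\nu\>$ term by the boundary condition, trade $\nu_j w_k\dd_k w_j$ for $-w_kw_j\dd_k\nu_j$ using that $w\cdot\n$ is a tangential derivative of the vanishing function $\<w,\nu\>$, and identify the remainder with the Weingarten map by explicit differentiation of the extended normal. All steps, including the final passage to the limit in $W_2^1$, match the paper's argument.
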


\begin{proof}
In view of Lemma \ref{l335} it is sufficient to consider smooth functions $w$.
We have
\begin{eqnarray}
\nonumber
\int_\O \left( |\n w|^2 - |\rot w|^2 \right) dx = 
\int_\O \dd_j w_k \dd_k \overline w_j \, dx = 
- \int_\O w_k \dd_j \dd_k \overline w_j \, dx + 
\int_{\dd\O} \nu_j w_k \dd_k \overline w_j \, dS \\
= \int_\O |\div w|^2 dx + \int_{\dd\O} \nu_j w_k \dd_k \overline w_j \, dS
- \int_{\dd\O} \nu_k w_k \dd_j \overline w_j \, dS ,
\label{37}
\end{eqnarray}
where
$$
\nu (x_1, x_2, \phi(x_1, x_2)) = \frac1{\sqrt{1+|\n\phi(x)|^2}} 
\left( \begin{array}{cc} 
\dd_1 \phi (x) \\ \dd_2 \phi (x) \\ -1 
\end{array}\right) 
$$
is the unit external normal to the boundary $\dd\O$.
We assume $\left. \<w,\nu\> \right|_{\dd\O} = 0$,
therefore the last term in \eqref{37} vanishes. 
Moreover, $w_k \dd_k (\nu_j \overline w_j) = 0$, 
since the operator $w_k \dd_k$ acts in the tangent plane only.
Thus,
$$
\int_\O \left( |\rot w|^2 + |\div w|^2 - |\n w|^2 \right) dx = 
\int_{\dd\O} \dd_k \nu_j w_k \overline w_j \, dS .
$$
Furthermore, 
$$
\dd_k \nu =  \frac1{\sqrt{1+|\n\phi|^2}} \ 
\dd_k \left( \begin{array}{cc} 
\dd_1 \phi \\ \dd_2 \phi \\ -1 
\end{array}\right) + 
\dd_k \left(\frac1{\sqrt{1+|\n\phi|^2}}\right) 
\left( \begin{array}{cc} 
\dd_1 \phi \\ \dd_2 \phi \\ -1 
\end{array}\right) .
$$
Using again the condition $\left. \<w,\nu\> \right|_{\dd\O} = 0$,
one can deduce from here that
$$
\<\overline w, \dd_k \nu\> = 
\frac1{\sqrt{1+|\n\phi|^2}} \left\< \overline w,
\left( \begin{array}{cc} 
\dd_1 \dd_k \phi \\ \dd_2 \dd_k \phi \\ 0 
\end{array}\right) \right\>
$$
and
$w_k \overline w_j \dd_k \nu_j = \< Aw, w \>$.
\end{proof}

\begin{theorem}
\label{t35}
Let $\O$ be a special Lipschitz domain, 
$$
\O = \{x \in \R^3 : x_3 > \phi (x_1,x_2) \},
$$
the set $\O \cap \{x \in \R^3 : \sqrt{x_1^2 + x_2^2} <2 \rho \}$ be convex,
$\O_\al = \{x \in \R^3 : x_3 > \phi_\al (x_1,x_2) \}$
be the collection of domains constructed in Theorem \ref{t22}.
Let $0 \in \dd\O$, matrix-function $\mu$ be defined in $\O \cap B_\rho (0)$
and satisfy \eqref{01} and \eqref{04} in $\O\cap B_\rho (0)$.
Then
$$
\|v\|_{W_2^1 (\O_\al)} \le C(\rho, \mu) \|v\|_{F(\O_\al, \mu)} 
\quad \forall v \in W_2^1 (\O_\al, \mu, \nu), \ \supp v \subset B_\rho,
$$
the constant $C(\rho, \mu)$ does not depend on $v$ and $\al$.
\end{theorem}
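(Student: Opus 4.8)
The goal is an a priori $W_2^1$ estimate for magnetic fields $v \in W_2^1(\O_\al,\mu,\nu)$ supported in $B_\rho$, uniform in $\al$, bounding the full Sobolev norm by the $F$-norm $\|v\|_F^2 = \|\rot v\|_{L_2}^2 + \|\div(\mu v)\|_{L_2}^2 + (\mu v,v)_{L_2}$. The key idea is to rewrite everything in terms of the field $w := \mu v$, which satisfies the normal boundary condition $\left.(\mu v)_\nu\right|_{\dd\O_\al} = 0$, and apply the integration-by-parts formula of Lemma \ref{l34} on the \emph{smooth} approximating domain $\O_\al$.

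I would proceed as follows. First, since $\O_\al$ is a $C^\infty$ special Lipschitz domain (Theorem \ref{t22}) and $w = \mu v$ satisfies $\left.w_\nu\right|_{\dd\O_\al}=0$, apply Lemma \ref{l34} to $w$ to obtain
\begin{equation*}
\int_{\O_\al} |\n (\mu v)|^2 dx = \int_{\O_\al} \l( |\rot(\mu v)|^2 + |\div(\mu v)|^2 \r) dx - \int_{\dd\O_\al} \<A(\mu v),(\mu v)\> dS .
\end{equation*}
The decisive point is the \emph{sign} of the boundary term: because $\O_\al \cap \{|x'|<\rho\}$ is convex, the function $\phi_\al$ is convex there, so its Hessian is nonnegative, and hence the Weingarten map $A$ in \eqref{36} is a nonnegative matrix on the relevant part of $\dd\O_\al$. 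Since $\supp v \subset B_\rho$, the boundary integral is supported where $A \ge 0$, giving $-\int_{\dd\O_\al}\<A(\mu v),(\mu v)\>dS \le 0$. This is exactly where the convexity of the domain enters and where Mitrea's approximation scheme pays off.

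Next I would pass from $w = \mu v$ back to $v$ using the algebraic Lemma \ref{l33} applied with $s = \mu$, $B = \mu(x)$: that lemma converts the combination $|\n(\mu v)|^2 - |\rot(\mu v)|^2 + \mu_1^2|\rot v|^2$ into a lower bound $\tfrac{\mu_0^2}{2}\|\n v\|_{L_2(\O_\al)}^2 - C(\rho,\mu)\|v\|_{L_2(\O_\al)}^2$, with the minor terms (derivatives falling on $\mu$) absorbed via Lemma \ref{l32}; crucially the constants there are uniform in $\al$. Combining the integration-by-parts identity (with its favorable boundary sign) with Lemma \ref{l33} yields
\begin{equation*}
\tfrac{\mu_0^2}{2}\|\n v\|_{L_2(\O_\al)}^2 \le \|\div(\mu v)\|_{L_2(\O_\al)}^2 + \mu_1^2\|\rot v\|_{L_2(\O_\al)}^2 + C(\rho,\mu)\|v\|_{L_2(\O_\al)}^2 ,
\end{equation*}
and estimating $\|v\|_{L_2}$ by the $F$-norm through the ellipticity $\mu \ge \mu_0\1$ finishes the bound $\|v\|_{W_2^1(\O_\al)} \le C(\rho,\mu)\|v\|_{F(\O_\al,\mu)}$.

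**The main obstacle.** The heart of the matter is securing the correct sign of the boundary term and, simultaneously, the uniformity of all constants in $\al$. The sign relies on convexity of $\phi_\al$ inside $B_\rho$, which Theorem \ref{t22} guarantees, while the uniform bounds on the extension operators and Sobolev constants (parts 2 and 3 of that theorem) are what make Lemma \ref{l32}, and hence Lemma \ref{l33}, uniform in $\al$. A subtle point to check is that $w = \mu v$ genuinely lies in $W_2^1(\O_\al)$ with the normal condition and compact support in $B_\rho$: this uses assumption \eqref{04} ($\mu \in W_3^1$) via the multiplier-type estimates so that $v \in W_2^1 \Rightarrow \mu v \in W_2^1$, and the support condition transfers directly. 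I expect the passage through Lemma \ref{l33} — keeping track of the minor terms and confirming that the constant $C(\rho,\mu)$ there does not degenerate as $\al \to 0$ — to be the technically delicate step, whereas the geometric sign of $A$ is the conceptually crucial one.
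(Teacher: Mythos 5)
Your proposal is correct and follows essentially the same route as the paper: apply Lemma \ref{l34} to $w=\mu v$ on the smooth approximating domain $\O_\al$, discard the boundary term using the convexity of $\phi_\al$ on $\{|x'|<\rho\}$ (which covers $\supp v$), and bound the remaining volume integral from below via Lemma \ref{l33}, with uniformity in $\al$ supplied by Theorem \ref{t22} and Lemma \ref{l32}. The subtle points you flag (that $\mu v\in W_2^1$ via \eqref{04}, and that the constants do not degenerate as $\al\to 0$) are exactly the ones the paper relies on.
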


\begin{proof}
Take $v \in W_2^1 (\O_\al, \mu, \nu)$, $\supp v \subset B_\rho$.
By virtue of Lemma \ref{l34}, 
applied to the domain $\O_\al$ and the function $w = \mu v$, we have
\begin{eqnarray*}
\int_{\O_\al} |\div (\mu v)|^2 dx + \mu_1^2 \int_{\O_\al} |\rot v|^2 dx \\
= \int_{\O_\al} \left(|\n (\mu v)|^2 - |\rot (\mu v)|^2 + 
\mu_1^2 |\rot v|^2\right) dx
 + \int_{\dd\O_\al} \< A_\al \mu v, \mu v \> dS =: I_1 + I_2,
\end{eqnarray*}
where the matrix $A_\al$ is determined by the formula \eqref{36} 
with the function $\phi$ changed by $\phi_\al$.
By Lemma \ref{l33} 
$$
I_1 \ge \frac{\mu_0^2}2 \int_{\O_\al} |\n v|^2 dx - 
C(\rho, \mu) \int_{\O_\al} |v|^2 dx.
$$
Furthermore, by Theorem \ref{t22} $D^2\phi_\al\ge 0$
for $\sqrt{x_1^2 + x_2^2} < \rho$, therefore $I_2\ge 0$. 
Thus, 
$$
\int_{\O_\al} |\div (\mu v)|^2 dx + \mu_1^2 \int_{\O_\al} |\rot v|^2 dx 
\ge \frac{\mu_0^2}2 \int_{\O_\al} |\n v|^2 dx - 
C(\rho, \mu) \int_{\O_\al} |v|^2 dx.\quad
\qedhere
$$
\end{proof}

\subsection{Electric field}

Recall that we assume the coefficient $s$ to be real.
Note that it is the subsection where we use this assumption. 

\begin{lemma}
\label{l36}
Let $\O \subset \R^3$ be a bounded domain, $\dd\O \in C^2$.
Let $u \in C^2 (\overline\O, \R^3)$, 
$s$ be a $(3\times 3)$ matrix-function with real entries, 
$s \in W_3^1 (\O)$.
Then
$$
\int_\O \left(|\rot(su)|^2 + |\div(su)|^2\right) dx = 
\int_\O |\n (su)|^2 dx + \int_{\dd\O} K(x, u(x)) \, dS + I_3 ,
$$
where 
\begin{equation}
\label{415}
K(x, u(x)) = s_{jm} s_{kn} (\nu_k \dd_j u_m - \nu_j \dd_k u_m) \overline u_n ,
\end{equation}
$\nu(x)$ is the unit external normal to $\dd\O$,
$I_3$ is the linear combination of integrals of type
\begin{equation}
\label{38}
\int_\O \dd_i s_{jk} \dd_l s_{mn} u_p \overline u_q dx 
\quad \text{and} \quad 
\int_\O s_{ij} \dd_k s_{lm} u_n \dd_p \overline u_q dx .
\end{equation}
\end{lemma}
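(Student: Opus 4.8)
The plan is to carry the matrix factor $s$ through the same scalar integration-by-parts mechanism already used in Lemma \ref{l34}. Writing $w=su$, i.e. $w_i=s_{ik}u_k$, and using (with summation over repeated indices understood) the pointwise algebraic identity
$$
|\rot w|^2+|\div w|^2-|\n w|^2=\dd_j w_j\,\dd_k\overline w_k-\dd_j w_k\,\dd_k\overline w_j ,
$$
which follows from $\epsilon_{ijk}\epsilon_{ilm}=\de_{jl}\de_{km}-\de_{jm}\de_{kl}$, I would integrate it over $\O$. This is legitimate since $su\in W_3^1(\O)$, so every term is in $L_1$. Because $s$ is real, $\overline w_j=s_{jn}\overline u_n$, and this is exactly where the reality hypothesis enters: it keeps both matrix factors equal to $s$ rather than producing the combination $s\overline s$.

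First I would insert $\dd_j w_k=\dd_j s_{kl}\,u_l+s_{kl}\,\dd_j u_l$ and expand. Every resulting term in which at least one derivative falls on $s$ is a bulk integral containing either two factors $\dd s$ and no derivative of $u$, or one factor $\dd s$ and one derivative of $u$; these are precisely the integrals \eqref{38}, and I collect them all into $I_3$. What remains are the \emph{main} terms, in which $s$ is undifferentiated and all derivatives act on $u$ or $\overline u$, namely
$$
\int_\O\bigl(s_{jl}s_{km}\,\dd_j u_l\,\dd_k\overline u_m-s_{kl}s_{jm}\,\dd_j u_l\,\dd_k\overline u_m\bigr)dx .
$$

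Next I would integrate the main terms by parts once, moving $\dd_k$ off $\overline u_m$. This is valid despite $s$ being only $W_3^1$: since $s\in L_\infty$ and $u\in C^2$, each coefficient $s_{jl}s_{km}\dd_j u_l$ lies in $W_3^1(\O)$, and with $\dd\O\in C^2$ the divergence theorem applies. Two kinds of bulk terms appear. Those in which the new derivative strikes a factor $s$ again are of type \eqref{38} and go into $I_3$. Those producing a second derivative of $u$ must cancel: the two main terms contribute $-\int_\O s_{jl}s_{km}\dd_k\dd_j u_l\,\overline u_m\,dx$ and $+\int_\O s_{kl}s_{jm}\dd_k\dd_j u_l\,\overline u_m\,dx$, which, after relabeling the dummy indices $j\leftrightarrow k$ in the first and using $\dd_j\dd_k u_l=\dd_k\dd_j u_l$ (valid since $u\in C^2$), are negatives of one another. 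This cancellation is the exact analogue of the vanishing of $\dd_j\dd_k w_k-\dd_k\dd_j w_k$ in the smooth case, and it is the step I expect to demand the most care in the bookkeeping.

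Finally I would assemble the boundary contributions of the two main terms, $\int_{\dd\O}\nu_k s_{jl}s_{km}\dd_j u_l\,\overline u_m\,dS$ and $-\int_{\dd\O}\nu_k s_{kl}s_{jm}\dd_j u_l\,\overline u_m\,dS$, relabeling the summation indices in each so that both fit the template $s_{jm}s_{kn}\cdots\overline u_n$. The first becomes $\int_{\dd\O}s_{jm}s_{kn}\nu_k\dd_j u_m\,\overline u_n\,dS$ and the second becomes $-\int_{\dd\O}s_{jm}s_{kn}\nu_j\dd_k u_m\,\overline u_n\,dS$, whose sum is precisely $\int_{\dd\O}K\,dS$ with $K$ as in \eqref{415}. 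Combined with all the $I_3$ integrals accumulated along the way, this gives the asserted identity. I would add the remark that the various \eqref{38}-type terms arise with the surviving derivative sometimes on $u$ and sometimes on $\overline u$; since $s$ and $\dd s$ are real these differ only by complex conjugation and index relabeling, so they are all of the stated type (and, as needed later, all estimable by Lemma \ref{l32}).
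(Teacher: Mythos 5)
Your argument is correct and is essentially the paper's own proof: both start from the pointwise identity coming from $\epsilon_{ijk}\epsilon_{ilm}=\de_{jl}\de_{km}-\de_{jm}\de_{kl}$, relegate every term carrying a derivative of $s$ to $I_3$, and integrate the remaining main terms by parts to produce exactly the boundary density $K$ of \eqref{415}. The only (immaterial) difference is bookkeeping: the paper integrates the single cross term $\int \dd_k(su)_j\,\dd_j\overline{(su)}_k$ by parts twice to convert it into $\int|\div(su)|^2$ plus the two boundary integrals, whereas you integrate both main terms by parts once and let the second-derivative terms cancel after relabeling.
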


\begin{rem}
In this proof we will denote different linear combinations of type \eqref{38} 
by the same letter $I_3$.
\end{rem}

\begin{proof}
We have
\begin{eqnarray*}
\int_\O \left(|\n (su)|^2 - |\rot(su)|^2\right) dx = 
\int_\O \dd_k (s_{jm} u_m) \dd_j (s_{kn} \overline u_n) dx \\
= \int_\O s_{jm} \dd_k u_m s_{kn} \dd_j \overline u_n dx + I_3 =
- \int_\O s_{jm} \dd_j \dd_k u_m s_{kn} \overline u_n dx + 
\int_{\dd\O} s_{jm} \dd_k u_m s_{kn} \nu_j \overline u_n dS + I_3 \\
= \int_\O s_{jm} \dd_j u_m \dd_k (s_{kn} \overline u_n) dx + 
\int_{\dd\O} s_{jm} \dd_k u_m s_{kn} \nu_j \overline u_n dS 
- \int_{\dd\O} s_{jm} \nu_k \dd_j u_m s_{kn} \overline u_n dS + I_3 \\
= \int_\O |\div(su)|^2 dx - \int_{\dd\O} K(x, u(x)) \, dS + I_3 .\quad
\qedhere
\end{eqnarray*}
\end{proof}

\begin{lemma}
\label{l37}
Let $\dd\O \in C^2$, $0 \in \dd\O$. 
Assume that in the neighbourhood of zero
domain $\O$ is described by the formula $x_3 > \psi (x_1, x_2)$, and
$$
\psi (0) = 0, \quad \n \psi (0) = 0, \quad D^2 \psi (0) \ge 0 .
$$
Let $s$ be a fixed matrix with real entries, $s > 0$.
Let $u \in C^1 (\overline \O, \R^3)$, $\left. u_\tau \right|_{\dd\O} = 0$.
Then
$$
K (0, u(0)) = s_{jm} s_{kn} 
(\nu_k (0) \dd_j u_m (0) - \nu_j (0) \dd_k u_m (0)) \overline u_n (0) 
\ge 0 .
$$
\end{lemma}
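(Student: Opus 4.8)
The plan is to compute $K(0,u(0))$ explicitly using the geometric data at the origin. At the point $0\in\dd\O$ we have $\n\psi(0)=0$, so the unit external normal is $\nu(0)=(0,0,-1)$; that is, $\nu_1(0)=\nu_2(0)=0$ and $\nu_3(0)=-1$. First I would substitute these values into the formula \eqref{415}. Since only $\nu_3$ survives, the sum over $k$ in $\nu_k\dd_j u_m$ collapses to $k=3$, and the sum over $j$ in $\nu_j\dd_k u_m$ collapses to $j=3$, leaving
\begin{equation*}
K(0,u(0)) = s_{jm}s_{3n}\,\dd_j u_m(0)\,\overline u_n(0)
- s_{3m}s_{kn}\,\dd_k u_m(0)\,\overline u_n(0),
\end{equation*}
with $j,k$ now running over $1,2$ (the $j=3$ term of the first piece and the $k=3$ term of the second cancel). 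So the essential object is a bilinear expression in the tangential derivatives $\dd_1 u,\dd_2 u$ of $u$ at $0$.

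**Exploiting the tangency condition.** The key input is $\left.u_\tau\right|_{\dd\O}=0$, which I would convert into pointwise information about $u$ and its tangential derivatives at the origin. Near $0$ the boundary is the graph $x_3=\psi(x_1,x_2)$ with $\n\psi(0)=0$, so the tangent plane at $0$ is horizontal and the tangency condition forces $u(0)$ to be normal, i.e. $u(0)=(0,0,u_3(0))$, so $\overline u_n(0)$ is nonzero only for $n=3$. More importantly, differentiating the constraint $u_\tau=0$ along the boundary controls the tangential derivatives of the tangential components of $u$. Parametrizing $\dd\O$ by $(x_1,x_2)\mapsto(x_1,x_2,\psi(x_1,x_2))$ and using $\n\psi(0)=0$, the two horizontal coordinate directions are tangent at $0$; the vanishing of $u_\tau$ along these directions, together with the second fundamental form $D^2\psi(0)$, pins down the antisymmetric combination $\dd_j u_m(0)-\dd_m u_j(0)$ for tangential indices in terms of $D^2\psi(0)$ acting on $u(0)$.

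**Assembling the quadratic form and using convexity.** After inserting $\overline u_n(0)=\delta_{n3}\overline u_3(0)$ and the derivative relations, $K(0,u(0))$ should reduce to a quadratic form in $u(0)$ whose coefficient matrix is built from $s$ and the Hessian $D^2\psi(0)$. I expect the outcome to be of the shape
\begin{equation*}
K(0,u(0)) = \bigl\langle D^2\psi(0)\,\xi,\xi\bigr\rangle
\end{equation*}
for a suitable vector $\xi$ linear in $u(0)$ and determined by $s$, so that the hypothesis $D^2\psi(0)\ge 0$ (the local convexity encoded in the Weingarten map being nonnegative) yields $K(0,u(0))\ge 0$ directly. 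The main obstacle will be the careful bookkeeping of which derivative components are genuinely free versus constrained by $u_\tau=0$: the tangency condition relates $\dd_j u_m$ only for the tangential components of $u$, while the normal component $u_3$ may have arbitrary tangential derivatives, and I must verify that all terms containing these free derivatives cancel by the antisymmetric structure $\nu_k\dd_j u_m-\nu_j\dd_k u_m$ rather than relying on the constraint. Once that cancellation is confirmed, the positivity is a one-line consequence of $s>0$ and $D^2\psi(0)\ge0$.
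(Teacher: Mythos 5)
Your plan follows the paper's proof essentially step for step: $\nu(0)=(0,0,-1)$, the tangency condition gives $u_1(0)=u_2(0)=0$ and, after differentiating $\<u,\tau_j\>=0$ along the boundary at $0$, $\dd_i u_j(0)=-u_3(0)\,\dd_i\dd_j\psi(0)$ for $i,j=1,2$; the free derivatives of $u_3$ do cancel by the antisymmetric structure, and positivity then follows from $D^2\psi(0)\ge 0$ and $s>0$. The only inaccuracy is the predicted final shape: what actually comes out is $\sum_{j,m=1,2}\l(s_{jm}s_{33}-s_{3m}s_{j3}\r)\dd_j\dd_m\psi(0)\,|u_3(0)|^2$, i.e.\ a trace pairing of $D^2\psi(0)$ against the matrix of $2\times 2$ minors of $s$ rather than a single quadratic form $\<D^2\psi(0)\,\xi,\xi\>$, and the paper diagonalizes $D^2\psi(0)$ so that only the positivity of the principal minors $s_{jj}s_{33}-s_{j3}s_{3j}$ of the positive definite matrix $s$ is needed.
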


\begin{proof}
We have 
$$
\nu (x) = \frac1{\sqrt{1+|\n\psi(x)|^2}} 
\left( \begin{array}{cc} 
\dd_1 \psi (x) \\ \dd_2 \psi (x) \\ -1 
\end{array}\right) ,
$$
and in particular,
\begin{equation}
\label{39}
\nu (0) = \left( \begin{array}{cc} 
0 \\ 0 \\ -1 \end{array}\right) .
\end{equation}
If we introduce the tangent vectors
$$
\tau_1 (x) = \left( \begin{array}{cc} 
1 \\ 0 \\ \dd_1 \psi (x) \end{array}\right) , \qquad
\tau_2 (x) = \left( \begin{array}{cc} 
0 \\ 1 \\ \dd_2 \psi (x) \end{array}\right) ,
$$
then the condition $\left. u_\tau \right|_{\dd\O} = 0$
can be rewritten in the form
\begin{equation}
\label{310}
\< u(x_1, x_2, \psi (x_1, x_2)), \tau_j (x_1, x_2)\> = 0, \quad j = 1, 2 ;
\end{equation}
At the origin it becomes
\begin{equation}
\label{311}
u_1(0) = u_2(0) = 0 .
\end{equation}
Differentiating \eqref{310} with respect to $x_i$, $i = 1, 2$, at the point $0$,
and taking into account the equality $\n\psi (0) = 0$, we get
\begin{equation}
\label{312}
\dd_i u_j (0)= -u_3 (0) \dd_i \dd_j \psi (0), \quad i, j = 1, 2 .
\end{equation} 
According to \eqref{39} and \eqref{311} 
$$
K (0, u(0)) = - s_{jm} s_{33} \dd_j u_m (0) \overline u_3 (0) 
+ s_{3m} s_{k3} \dd_k u_m (0) \overline u_3 (0) .
$$
The terms with $m=3$ cancel out.
Moreover, the first term with $j=3$ cancels with the second term with $k=3$.
Therefore,
$$
K (0, u(0)) = \sum_{j,m = 1,2} 
\left(s_{jm} s_{33} - s_{3m} s_{j3}\right) \dd_j \dd_m \psi (0) |u(0)|^2 ,
$$
where we used \eqref{312}.
Without loss of generality, one may assume 
that matrix $(\dd_j \dd_m \psi (0))$ is diagonal,
$$
(D^2 \psi (0)) = 
\left( \begin{array}{cc}
\la_1 & 0 \\
0 & \la_2 \end{array} \right),
\quad \la_i \ge 0 .
$$
Therefore,
$$
K (0, u(0)) = \sum_{j = 1,2} 
\left(s_{jj} s_{33} - s_{3j} s_{j3}\right) \la_j |u(0)|^2 \ge 0, 
$$
because the matrix $s$ is positively definite.
\quad$\qedhere$\end{proof}

\begin{theorem}
\label{t38}
Let $\O$ be a special Lipschitz domain, $0 \in \dd\O$,
the set
$$
\O \cap \{x \in \R^3 : \sqrt{x_1^2 + x_2^2} <2 \rho \}
$$ 
be convex, $\O_\al = \{x \in \R^3 : x_3 > \phi_\al (x_1,x_2) \}$
be domains constructed in Theorem \ref{t22} and filling $\O$.
Let $\er$ be a matrix-function defined in $\O\cap B_\rho (0)$ 
and satisfying \eqref{01} and \eqref{04} in $\O\cap B_\rho (0)$.
Then
$$
\|u\|_{W_2^1 (\O_\al)} \le C(\rho, \er) \|u\|_{F(\O_\al, \er)} 
\quad \forall u \in W_2^1 (\O_\al, \tau), \ \supp u \subset B_\rho,
$$
and the constant $C(\rho, \er)$ does not depend on $v$ and $\al$.
\end{theorem}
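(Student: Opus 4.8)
The plan is to follow the scheme of the magnetic case (Theorem~\ref{t35}), replacing the integration-by-parts identity of Lemma~\ref{l34} by the one in Lemma~\ref{l36} and the Weingarten term by the quadratic form $K$. Since $\er$ is real, I would first decompose $u$ into real and imaginary parts: the norms and the condition $u_\tau|_{\dd\O_\al}=0$ split accordingly, so it suffices to treat real $u$. Because each $\O_\al$ is a $C^\infty$ special Lipschitz domain (its defining function $\phi_\al$ is smooth by Theorem~\ref{t22}, in particular $\dd\O_\al\in C^2$), Lemma~\ref{l335}a lets me approximate $u\in W_2^1(\O_\al,\tau)$, $\supp u\subset B_\rho$, by functions $u^k\in C^2(\overline{\O_\al},\R^3)\cap W_2^1(\O_\al,\tau)$ with $\supp u^k\subset B_\rho$ and $u^k\to u$ in $W_2^1$. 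Hence it is enough to prove the inequality for $u\in C^2(\overline{\O_\al},\R^3)$ with $u_\tau|_{\dd\O_\al}=0$ and $\supp u\subset B_\rho$, and then pass to the limit, all constants being uniform in $\al$ and $u$.

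For such $u$, Lemma~\ref{l36} applied on $\O_\al$ with $s=\er$ reads
\begin{equation*}
\int_{\O_\al}\l(|\rot(\er u)|^2 + |\div(\er u)|^2\r)dx =
\int_{\O_\al}|\n(\er u)|^2\,dx + \int_{\dd\O_\al}K(x,u(x))\,dS + I_3 .
\end{equation*}
Transferring $\int_{\O_\al}|\rot(\er u)|^2$ to the right and adding $\er_1^2\int_{\O_\al}|\rot u|^2$ to both sides, the left-hand side becomes exactly the quantity controlled by Lemma~\ref{l33}: since $\supp u\subset B_\rho$ it equals the integral over $\O_\al\cap B_\rho$, which (with $s=\er$, $v=u$, $\be_0=\er_0$, $\be_1=\er_1$) is at least $\frac{\er_0^2}{2}\|\n u\|_{L_2(\O_\al)}^2 - C(\rho,\er)\|u\|_{L_2(\O_\al)}^2$. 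Consequently
\begin{equation*}
\frac{\er_0^2}{2}\|\n u\|_{L_2(\O_\al)}^2 - C\|u\|_{L_2(\O_\al)}^2 \le
\int_{\O_\al}|\div(\er u)|^2\,dx + \er_1^2\int_{\O_\al}|\rot u|^2\,dx
- \int_{\dd\O_\al}K\,dS - I_3 .
\end{equation*}

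The main obstacle is the sign of the boundary term, and the crux is to show $\int_{\dd\O_\al}K\,dS\ge 0$ so that $-\int_{\dd\O_\al}K\le 0$. I would argue pointwise: $\supp u\subset B_\rho$ confines the surface integral to $\dd\O_\al\cap B_\rho$, where $\phi_\al$ is convex by Theorem~\ref{t22}, so the second fundamental form of $\dd\O_\al$ is nonnegative there. At a fixed boundary point $x_0$, after a rigid motion carrying $x_0$ to the origin and the outer normal to $(0,0,-1)^{\top}$ — under which $\er(x_0)$ becomes a real positive definite symmetric matrix, $K$ is invariant (being a full contraction of the tensors $\er(x_0)$, $\nu$, $\n u$, $\overline u$), and the local graph function acquires $\n\psi(0)=0$, $D^2\psi(0)\ge 0$ — Lemma~\ref{l37} applies with $s=\er(x_0)$ and yields $K(x_0,u(x_0))\ge 0$, whence $\int_{\dd\O_\al}K\,dS\ge 0$. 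The remaining term $I_3$ is a linear combination of integrals of the two types in \eqref{38}, so Lemma~\ref{l32} gives $|I_3|\le\de\|\n u\|_{L_2(\O_\al)}^2 + C(\de,\rho,\er)\|u\|_{L_2(\O_\al)}^2$ for every $\de>0$. Choosing $\de=\er_0^2/4$ and combining the three estimates leaves
$$
\frac{\er_0^2}{4}\|\n u\|_{L_2(\O_\al)}^2 \le
\|\div(\er u)\|_{L_2(\O_\al)}^2 + \er_1^2\|\rot u\|_{L_2(\O_\al)}^2 + C(\rho,\er)\|u\|_{L_2(\O_\al)}^2 \le C(\rho,\er)\|u\|_{F(\O_\al,\er)}^2 .
$$
Adding $\|u\|_{L_2(\O_\al)}^2\le\er_0^{-1}(\er u,u)_{L_2}\le\er_0^{-1}\|u\|_{F(\O_\al,\er)}^2$ yields the claimed bound $\|u\|_{W_2^1(\O_\al)}\le C(\rho,\er)\|u\|_{F(\O_\al,\er)}$ with constant independent of $\al$ and $u$.
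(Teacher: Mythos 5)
Your proposal is correct and follows essentially the same route as the paper's own proof: density of smooth fields via Lemma \ref{l335}, the integration-by-parts identity of Lemma \ref{l36}, the coercivity of the main term via Lemma \ref{l33}, nonnegativity of the boundary term via convexity of $\phi_\al$ and Lemma \ref{l37}, and absorption of $I_3$ via Lemma \ref{l32}. The extra details you supply (splitting into real and imaginary parts, the rigid-motion reduction to the normalized situation of Lemma \ref{l37}) are points the paper leaves implicit, and they are handled correctly.
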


\begin{proof}
Due to Lemma \ref{l335} it is sufficient to consider smooth functions $u$.
By virtue of Lemma \ref{l36} 
\begin{eqnarray*}
\int_{\O_\al} |\div (\er u)|^2 dx + \er_1^2 \int_{\O_\al} |\rot u|^2 dx \\
= \int_{\O_\al} \left(|\n (\er u)|^2 - |\rot (\er u)|^2 
+ \er_1^2 |\rot u|^2\right) dx
 + \int_{\dd\O_\al} K(x, u(x))\, dS + I_3,
\end{eqnarray*}
where $K(x, u(x))$ is defined by \eqref{415}, 
$I_3$ is a linear combination of integrals of type \eqref{38}.
By Lemma \ref{l33} 
$$
\int_{\O_\al} \left(|\n (\er u)|^2 - |\rot (\er u)|^2 + \er_1^2 |\rot u|^2\right) dx
\ge \frac{\er_0^2}2 \int_{\O_\al} |\n u|^2 dx 
- C(\rho, \er) \int_{\O_\al} |u|^2 dx .
$$
By construction of $\O_\al$,  
$(D^2 \phi_\al)(x) \ge 0$ for $\sqrt{x_1^2 + x_2^2} <\rho$.
The trace of a positive definite matrix-function on the boundary 
is also positive definite, so $K (x, u(x)) \ge 0$ for a.e. $x \in \dd\O_\al$
due to the Lemma \ref{l37}.
Therefore,
$$
\int_{\dd\O_\al} K(x, u(x))\, dS \ge 0.
$$
Finally, by Lemma \ref{l32} 
$$
|I_3| \le \frac{\er_0^2}4 \int_{\O_\al} |\n u|^2 dx 
+ C(\rho, \er) \int_{\O_\al} |u|^2 dx,
$$
and therefore 
$$
\int_{\O_\al} \left(|\div (\er u)|^2 + \er_1^2 |\rot u|^2\right) dx 
\ge \frac{\er_0^2}4 \int_{\O_\al} |\n u|^2 dx 
- C(\rho, \er) \int_{\O_\al} |u|^2 dx .\quad
\qedhere
$$
\end{proof}

\section{Proof of Theorem \ref{t11}}
\subsection{Case of special Lipschitz domain}

\begin{lemma}
\label{l51}
Let $\{\O_k\}$ be an ascending sequence of domains, 
$\O_k \subset \O_{k+1}$, $\cup_k \O_k = \O$.
Suppose that $f_k \in L_2 (\O_k)$, $\|f_k\|_{L_2 (\O_k)} \le C_0$, 
and there exists $f \in L_2 (\O)$, such that 
$\left.f_k\right|_{\O_m} \to \left.f\right|_{\O_m}$ 
weakly in $L_2 (\O_m)$
\footnote{Here and in similar situations below we mean
tendency via the sequence $k = m, m+1, \dots$.}
for all $m$.
Then
$$
(f_k, z)_{L_2 (\O_k)} \to (f, z)_{L_2(\O)} \quad \forall \ z \in L_2 (\O) .
$$
\end{lemma}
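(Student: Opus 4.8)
The statement to prove is Lemma \ref{l51}: we have an ascending sequence of domains $\O_k$ with $\cup_k \O_k = \O$, functions $f_k \in L_2(\O_k)$ uniformly bounded by $C_0$, and a limit $f \in L_2(\O)$ such that $\left.f_k\right|_{\O_m} \to \left.f\right|_{\O_m}$ weakly in $L_2(\O_m)$ for every fixed $m$. We want to conclude that $(f_k, z)_{L_2(\O_k)} \to (f, z)_{L_2(\O)}$ for every $z \in L_2(\O)$.

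Let me think about what's going on here. The subtlety is that the domain of integration $\O_k$ changes with $k$, so we cannot directly apply weak convergence on a fixed space. The weak convergence hypothesis is only local — on each fixed $\O_m$. We need to upgrade this to convergence of the full inner product over the growing domain $\O_k$ against a fixed test function $z$ on all of $\O$.

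Let me sketch the proof.

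**The plan.** First I would fix $z \in L_2(\O)$ and $\varepsilon > 0$. The key idea is to split the inner product $(f_k, z)_{L_2(\O_k)}$ into a piece over a fixed large subdomain $\O_m$, where weak convergence applies directly, and a tail piece over $\O_k \setminus \O_m$, which I control using the uniform bound $\|f_k\|_{L_2(\O_k)} \le C_0$ together with the smallness of $z$ on the tail.

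More precisely, since $z \in L_2(\O)$ and $\cup_m \O_m = \O$ (so $\mes(\O \setminus \O_m) \to 0$, or at least the integral of $|z|^2$ over the complement goes to zero by dominated convergence), I would choose $m$ large enough that $\|z\|_{L_2(\O \setminus \O_m)} \le \varepsilon$. Then for $k \ge m$ I would write
$$
(f_k, z)_{L_2(\O_k)} = (f_k, z)_{L_2(\O_m)} + (f_k, z)_{L_2(\O_k \setminus \O_m)}.
$$
The tail term is estimated by Cauchy--Schwarz:
$$
\left| (f_k, z)_{L_2(\O_k \setminus \O_m)} \right| \le \|f_k\|_{L_2(\O_k)} \, \|z\|_{L_2(\O \setminus \O_m)} \le C_0 \varepsilon.
$$
Similarly $\left| (f, z)_{L_2(\O \setminus \O_m)} \right| \le \|f\|_{L_2(\O)} \varepsilon$. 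For the main term, the weak convergence hypothesis gives $(f_k, z)_{L_2(\O_m)} \to (f, z)_{L_2(\O_m)}$ as $k \to \infty$ (with $m$ fixed), since $\left.z\right|_{\O_m} \in L_2(\O_m)$ is a fixed test function. Combining these, I would get $\limsup_{k} |(f_k, z)_{L_2(\O_k)} - (f, z)_{L_2(\O)}| \le (C_0 + \|f\|_{L_2(\O)}) \varepsilon$, and since $\varepsilon$ is arbitrary the claim follows.

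**The main obstacle.** The only genuinely delicate point is handling the tail uniformly in $k$, and this is exactly where the uniform bound $\|f_k\|_{L_2(\O_k)} \le C_0$ is indispensable: without it, the contribution over $\O_k \setminus \O_m$ could blow up even though $z$ is small there. Everything else is a routine $\varepsilon$-argument combining local weak convergence with a Cauchy--Schwarz tail estimate, so I expect no serious difficulty beyond keeping the order of quantifiers straight (fix $z$, then $\varepsilon$, then $m$, then let $k \to \infty$).
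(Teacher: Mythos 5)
Your proof is correct and follows essentially the same route as the paper's: split $(f_k,z)_{L_2(\O_k)}$ into the part over a fixed $\O_m$ (handled by the local weak convergence) and a tail over $\O_k\setminus\O_m$ (handled by Cauchy--Schwarz, the uniform bound $C_0$, and the smallness of $\|z\|_{L_2(\O\setminus\O_m)}$). No discrepancies worth noting.
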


\begin{proof}
Let $\er>0$. 
Choose $m$ such that 
$$
(C_0 + \Vert f\Vert_{L_2 (\O)}) \, \Vert z\Vert_{L_2 (\O\setminus\O_m)}\le \er/2.
$$
Then for $k>m$ we have 
\begin{eqnarray*}
|(f_k, z)_{L_2 (\O_k)}-(f, z)_{L_2 (\O)}| \\
\le |(f_k, z)_{L_2 (\O_m)}-(f, z)_{L_2 (\O_m)}|+ 
(\Vert f_k\Vert_{L_2 (\O_k)}  
+\Vert f\Vert_{L_2 (\O)}) \Vert z\Vert_{L_2 (\O\setminus\O_m)} \\
\le |(f_k, z)_{L_2 (\O_m)}-(f, z)_{L_2 (\O_m)}| +\er/2.
\end{eqnarray*}
It remains to choose $k$ such that the first term is less than $\er/2$.
\end{proof}

Recall that the operators ${\cal L}$ and ${\cal L}_s$
are determined by the formula \eqref{08} on the domains
\begin{equation*}
{\cal D} (\O) := F(\O, \er, \tau) \oplus W_2^1 (\O, \C) 
\oplus F (\O, \mu, \nu) \oplus \mathring W_2^1 (\O, \C) 
\end{equation*}
and
$$
{\cal A} (\O) := W_2^1 (\O, \tau) \oplus W_2^1 (\O, \C) 
\oplus W_2^1 (\O, \mu, \nu) \oplus \mathring W_2^1 (\O, \C) .
$$
It is natural to introduce the graph norm
$\|X\|_{{\cal D} (\O)}^2 = \|{\cal L} X\|^2_{L_2} + \|X\|_{L_2}^2$
as a norm in the space ${\cal D} (\O)$,
and the $W_2^1$-norm as a norm in the space ${\cal A} (\O)$.

\begin{theorem}
\label{t52}
Let $\O\subset \R^3$ be a special Lipschitz domain, the set
$$
\O \cap \{x \in \R^3 : \sqrt{x_1^2 + x_2^2} <2 \rho \}
$$ 
be convex, $0 \in \dd\O$.
Let $\er$ and $\mu$ be matrix-functions defined in $\O\cap B_{2\rho} (0)$ 
and satisfying there conditions \eqref{01} and \eqref{04}.
Let
$$
X = \left( \begin{array}{cc} E \\ \ph \\ H \\ \eta\end{array} \right) 
\in {\cal D} (\O), \qquad \supp X \subset \overline\O \cap B_{\rho/2} (0) .
$$
Then 
$$
X \in {\cal A} (\O) , \quad 
\|X\|_{W_2^1} \le C(\rho,\er,\mu) (\|{\cal L} X\|_{L_2} + \|X\|_{L_2}) .
$$
\end{theorem}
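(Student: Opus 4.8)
The plan is to derive the estimate from a \emph{uniform} a priori estimate on the smooth domains $\O_\al$ supplied by Theorem \ref{t22}, and then to pass to the limit $\al\to 0$. Set $Y=({\cal L}+i)X\in L_2(\O,\C^8)$; since ${\cal L}$ is a first-order differential operator, $\supp Y\subset\overline\O\cap B_{\rho/2}(0)$. On each smooth domain $\O_\al\subset\O$ the weak operator ${\cal L}_{\O_\al}$ is self-adjoint, so $({\cal L}_{\O_\al}+i)$ is boundedly invertible, and I put
\[
X_\al=({\cal L}_{\O_\al}+i)^{-1}\bigl(Y|_{\O_\al}\bigr),\qquad \|X_\al\|_{L_2(\O_\al)}\le\|Y\|_{L_2(\O)}.
\]
Because $\dd\O_\al$ is smooth, local elliptic regularity for the elliptic operator ${\cal L}_s$ gives $X_\al\in{\cal A}(\O_\al)$ (this is the smooth case, the regularity being local near the smooth boundary, cf.\ Theorem \ref{localsmooth}); thus $X_\al=(E_\al,\ph_\al,H_\al,\eta_\al)$ satisfies the boundary conditions of ${\cal A}(\O_\al)$ on $\dd\O_\al$ and $({\cal L}_{\O_\al}+i)X_\al=Y|_{\O_\al}$.

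The first main step is a uniform $W_2^1$ bound. The algebraic point is that the graph norm controls the $F$-norms of $E_\al$ and $H_\al$ \emph{separately} from the gradient parts: expanding $\|{\cal L}X_\al\|_{L_2(\O_\al)}^2$ in the weighted norm, the mixed terms are $(\rot H_\al,\n\eta_\al)_{L_2}$ and $(\rot E_\al,\n\ph_\al)_{L_2}$ (the weights $\er,\mu$ cancel against $\er^{-1},\mu^{-1}$), and both vanish after integration by parts: $\div\rot=0$ and the boundary integrals over $\dd\O_\al$ drop out because $(E_\al)_\tau|_{\dd\O_\al}=0$ and $\eta_\al\in\mathring W_2^1(\O_\al)$. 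Hence $\|\rot E_\al\|,\|\div(\er E_\al)\|,\|\rot H_\al\|,\|\div(\mu H_\al)\|,\|\n\ph_\al\|,\|\n\eta_\al\|$ are all bounded by $\|{\cal L}X_\al\|_{L_2}+\|X_\al\|_{L_2}\le 2\|Y\|_{L_2(\O)}$. I then fix a cut-off $\chi\in C_0^\infty(B_\rho)$ with $\chi\equiv 1$ on $B_{\rho/2}$ and apply Theorem \ref{t38} to $\chi E_\al\in W_2^1(\O_\al,\tau)$ and Theorem \ref{t35} to $\chi H_\al\in W_2^1(\O_\al,\mu,\nu)$, together with the Poincar\'e inequality for $\chi\ph_\al,\chi\eta_\al$; the commutator $[{\cal L},\chi]X_\al$ is of zeroth order and is absorbed into $\|X_\al\|_{L_2}$. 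Since the constants in Theorems \ref{t35}, \ref{t38} are independent of $\al$, this yields $\|\chi X_\al\|_{W_2^1(\O_\al)}\le C(\rho,\er,\mu)\,\|Y\|_{L_2(\O)}$ uniformly in $\al$.

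It remains to let $\al\to 0$. The global $L_2$ bounds on $X_\al$ and on $\rot E_\al,\div(\er E_\al),\rot H_\al,\div(\mu H_\al)$ allow me to extract, along a subsequence, a weak limit $\tilde X$ for which, on every fixed $\O_m$, the components and the relevant $\rot$/$\div$ converge weakly; Lemma \ref{l51} then converts these into convergence of the pairings over the exhausting family $\O_\al\nearrow\O$. Testing $({\cal L}_{\O_\al}+i)X_\al=Y|_{\O_\al}$ against functions compactly supported in $\O$ and using self-adjointness of ${\cal L}_{\O_\al}$ identifies the interior equation $({\cal L}+i)\tilde X=Y$; passing the integral identities of Definitions \ref{o11} and \ref{o12} to the limit (again by Lemma \ref{l51}) shows that $\tilde X$ satisfies the tangential and normal boundary conditions on $\dd\O$, i.e.\ $\tilde X\in\dom{\cal L}={\cal D}(\O)$. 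By invertibility of ${\cal L}+i$ this forces $\tilde X=X$. Finally $\chi X_\al\rightharpoonup\chi\tilde X=\chi X=X$ in $W_2^1$ (recall $\supp X\subset B_{\rho/2}$, where $\chi\equiv 1$), so weak lower semicontinuity of the norm gives $\|X\|_{W_2^1(\O)}\le\liminf\|\chi X_\al\|_{W_2^1(\O_\al)}\le C\|Y\|_{L_2}\le C(\|{\cal L}X\|_{L_2}+\|X\|_{L_2})$, which is the claim.

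I expect the main obstacle to be the limit identification $\tilde X=X$: one must recover the boundary conditions of the weak Maxwell domain for the limit field, which is precisely where the weak formulations of Definitions \ref{o11}, \ref{o12} together with the ascending-domain Lemma \ref{l51} are indispensable, and one must be sure the $W_2^1$ bound is genuinely $\al$-independent, which rests on the orthogonality decoupling the rotor and gradient contributions in the graph norm together with the uniform constants of Theorems \ref{t35} and \ref{t38}.
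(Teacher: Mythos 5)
Your proposal is correct and follows essentially the same route as the paper: solve the resolvent equation on the smooth approximating domains $\O_\al$ of Theorem \ref{t22}, invoke Theorem \ref{localsmooth} for $W_2^1$-regularity there, apply the uniform a priori estimates of Theorems \ref{t35} and \ref{t38} to a cut-off of the approximants, and identify the weak limit with $X$ via Lemma \ref{l51} and the integral-identity boundary conditions of Definitions \ref{o11} and \ref{o12}. The only (welcome) addition is that you spell out why the graph norm of ${\cal L}$ controls the $F$-norms of $E$ and $H$ and the gradients of $\ph,\eta$ separately (vanishing of the cross terms $(\rot E,\n\ph)$, $(\rot H,\n\eta)$ by $\div\rot=0$ and the boundary conditions), a point the paper uses implicitly when it passes from $\|X_\al\|_{{\cal D}(\O_\al)}$ to the separate estimates.
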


\begin{rem}
The idea of the proof of this theorem is borrowed from \cite{Mi}.
\end{rem}

\begin{proof}
Let $\O_\al$ be the domains constructed in Theorem \ref{t22}.
Put
$$
f_\al = \left.(({\cal L} - i I) X)\right|_{\O_\al\cap B_{2\rho}} , \quad 
f_\al \in L_2 (\O_\al\cap B_{2\rho}, \C^8; \er, \mu) .
$$
Let ${\cal L}_\al$ be the self-adjoint operator, 
defined by formula \eqref{08} on the domain 
${\cal D} (\O_\al\cap B_{2\rho})$.
Introduce
$X_{1,\al} = ({\cal L}_\al - i I)^{-1} f_\al \in {\cal D} (\O_\al\cap B_{2\rho})$.
We have
$$
\|X_{1,\al}\|_{{\cal D} (\O_\al\cap B_{2\rho})} = 
\|f_\al\|_{L_2(\O_\al\cap B_{2\rho})} 
\le \|X\|_{{\cal D} (\O)}  \quad \forall \ \al\in(0,1) .
$$
Let us consider the field 
$$
X_{2,\al} : = \left.X\right|_{\O_\al \cap B_{2\rho}} - X_{1,\al} , \quad 
\|X_{2,\al}\|_{{\cal D} (\O_\al\cap B_{2\rho})} \le 2 \|X\|_{{\cal D} (\O)} .
$$
Without loss of generality one can assume that 
$X_{2,\al_k} \mathop{\to}\limits_{\al_k \to 0} X_3$ 
(see footnote to Lemma \ref{l51})
weakly in the space
\begin{equation}
\label{515}
F(\O_\be\cap B_{2\rho}, \er) \oplus W_2^1 (\O_\be\cap B_{2\rho}, \C) 
\oplus F (\O_\be\cap B_{2\rho}, \mu) \oplus W_2^1 (\O_\be\cap B_{2\rho}, \C) 
\end{equation}
for any fixed $\be > 0$, and 
$$
X_3 \in F(\O\cap B_{2\rho}, \er) \oplus W_2^1 (\O\cap B_{2\rho}, \C) 
\oplus F (\O\cap B_{2\rho}, \mu) \oplus W_2^1 (\O\cap B_{2\rho}, \C).
$$
Let us show that the components of the field
$$
X_3 = \left( \begin{array}{cc} E_3 \\ \ph_3 \\ H_3 \\ \eta_3\end{array} \right)
$$
satisfy appropriate boundary conditions, 
i.e. $X_3 \in {\cal D} (\O\cap B_{2\rho})$.
Indeed, let $h \in L_2(\O\cap B_{2\rho}, \C^3)$, 
$\rot h \in L_2 (\O\cap B_{2\rho}, \C^3)$.
By virtue of Lemma \ref{l51} we have
\begin{eqnarray*}
( E_3,\rot h)_{L_2(\O\cap B_{2\rho})}-(\rot E_3,h)_{L_2(\O\cap B_{2\rho})} \\ 
= \mathop{\lim}\limits_{\al_k\rightarrow 0}
\left((E_{2,\al_k},\rot h)_{L_2(\O_{\al_k}\cap B_{2\rho})} - 
(\rot E_{2,\al_k},h)_{L_2(\O_{\al_k}\cap B_{2\rho})}\right) \\
= \mathop{\lim}\limits_{\al_k\rightarrow 0}
\left(( E,\rot h)_{L_2(\O_{\al_k}\cap B_{2\rho})} - 
(\rot E,h)_{L_2(\O_{\al_k}\cap B_{2\rho})}\right) \\
= (E,\rot h)_{L_2(\O)}-(\rot E,h)_{L_2(\O)} = 0, 
\end{eqnarray*}
since fields $X_{1,\al}$ and $X$ satisfy the conditions
$\left.\left(E_{1,\al}\right)_\tau\right|_{\dd(\O_\al\cap B_{2\rho})} = 0$
and
$\left.E_\tau\right|_{\dd\O} = 0$ respectively.
Thus, $\left.(E_3)_\tau\right|_{\dd(\O\cap B_{2\rho})} = 0$.
One can establish similarly the equalities
$\left.\left(\mu H_3\right)_\nu\right|_{\dd(\O\cap B_{2\rho})} = 0$ and
$\left.\eta_3\right|_{\dd(\O\cap B_{2\rho})} = 0$.
Finally,
$\left.\left(({\cal L}_{\al_k} - i) X_{1,\al_k}\right)\right|_{\O_\be\cap B_{2\rho}} 
= f_\be$
for $\al_k \le \be$. 
Therefore,
$$
\left.\left(({\cal L} - i) X_{2,\al_k}\right)\right|_{\O_\be\cap B_{2\rho}} =0 \ 
\text{ for } \ \al_k\le\be,
$$
where ${\cal L}$ is understood as differential expression \eqref{08}.
Hence, $({\cal L} - i) X_3 = 0$ in $\O$.
Since $X_3 \in {\cal D} (\O)$, we conclude that $X_3 = 0$.
Therefore, $X_{1,\al}$ converges to $X$ weakly 
in the space \eqref{515} for all $\be > 0$.

Further, introduce the vector 
$X_\al=\chi_{{\rho}}X_{1,\al}\in {\cal D} (\O_\al\cap B_{2\rho})$. 
Here $\chi_{{\rho}}\in C_0^\infty(\R^3), $
\[\chi_\rho(x)=\left\lbrace
\begin{array}{cc}
1,& x\in B_{\rho/2}\\
0,&x\notin B_\rho\\
\end{array}. \right.
\]
On the boundary of the set $\O_\al\cap B_{2\rho}$ 
the function $X_\al$ can be different from zero only on $\dd\O_\al$.
By Theorem \ref{localsmooth} $X_{\al} \in {\cal A} (\O_\al)$.
According to Theorems \ref{t35} and \ref{t38} 
$$
\|X_{\al}\|_{W_2^1 (\O_\al)} \le  
C(\rho,\er,\mu) \|X_{\al}\|_{{\cal D} (\O_\al)}  
\le C(\rho,\er,\mu) \|X\|_{{\cal D} (\O)} .
$$
Thus, there exists a sequence $\al_k \to 0$ 
and the field $X_0 \in W_2^1 (\O, \C^8)$ such that
$X_{\al_k} \to X_0$ weakly in $W_2^1 (\O_\be, \C^8)$ 
(see footnote to Lemma \ref{l51}) for all $\be > 0$;
$\|X_0\|_{W_2^1(\O)} \le C(\rho,\er,\mu) \|X\|_{{\cal D} (\O)}$.
So, $X = X_0 \in {\cal A} (\O)$.
\end{proof}

\subsection{($W^2_3\cap W^1_\infty$)-diffeomorphisms}
\begin{lemma}
\label{l52}
Let $\Omega$, $\tilde\Omega$ be bounded domains in $\R^3$,
$\psi: \Omega \to \tilde\Omega$ be a bijection such that
$$
\psi \in W^2_3 \cap W^1_\infty (\Omega), \quad
\psi^{-1} \in W^2_3 \cap W^1_\infty (\tilde\Omega).
$$
Let $u$ and $v$ be functions, connected via the relation
\begin{equation}
\label{52}
u_j (x) = \dd_j \psi_k (x) v_k (\psi(x)).
\end{equation} 
Let $s$ be a matrix-function.  
Denote $J_{jk}(x)= \dd_j \psi_k (x), y=\psi(x).$
Then for a.e. $x\in\O$
$$
(\rot_x u)(x)=(|\det J| (J^{-1})^t \rot_y v)(y),
$$
$$
(\div_x(su))(x)=
\left(|\det J|\div_y\left(\dfrac{J^tsJ v}{|\det J|}\right)\right)(y).
$$
\end{lemma}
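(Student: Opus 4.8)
The two identities are the classical transformation laws of vector analysis under a change of variables, read off most transparently from the language of differential forms; the only real issue is to justify them at the regularity $W^2_3\cap W^1_\infty$. The plan is to interpret $v$ as a differential $1$-form on $\tilde\O$ and to note that relation \eqref{52} says precisely that $u$ is its pull-back $\psi^*v$. Since $\rot$ corresponds to the exterior derivative on $1$-forms (the resulting $2$-form being identified with a vector field through the Euclidean volume element), and since $d$ commutes with pull-backs, $d(\psi^*v)=\psi^*(dv)$; unwinding the identification of $2$-forms with vector fields produces the factor $|\det J|(J^{-1})^t$ and gives the first formula. The second formula is the same statement for the flux $2$-form $\iota_{su}\,dV_x$: the field $su$ on $\O$ corresponds under $\psi$ to $W=J^t sJ\,v/|\det J|$ on $\tilde\O$ in the sense that $\iota_{su}\,dV_x=\psi^*(\iota_W\,dV_y)$, and applying $d$ together with the identities $\div w\,dV=d(\iota_w\,dV)$ and $d\psi^*=\psi^*d$ yields $\div_x(su)=|\det J|\,(\div_y W)\circ\psi$, which is exactly the asserted relation.

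To make this rigorous I would first verify both identities by a direct computation for a smooth diffeomorphism $\psi$ and smooth $v$. For the curl one writes $(\rot u)_i=\epsilon_{ipq}\dd_p(\dd_q\psi_k\,(v_k\circ\psi))$; the term carrying $\dd_p\dd_q\psi_k$ vanishes because it pairs the antisymmetric $\epsilon_{ipq}$ with the symmetric matrix of second derivatives, while the remaining term is treated by the chain rule $\dd_p(v_k\circ\psi)=(\dd_lv_k)\circ\psi\,\dd_p\psi_l$ and the cofactor relation $\mathrm{cof}(J)=\det J\,(J^{-1})^t$, which turns $\epsilon_{ipq}\dd_p\psi_l\,\dd_q\psi_k$ into $\det J\,(J^{-1})_{mi}\epsilon_{mlk}$ and collapses the expression to $\det J\,(J^{-1})^t(\rot v)\circ\psi$. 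The divergence identity follows from the analogous expansion of $\dd_j(s_{jk}\dd_k\psi_l\,(v_l\circ\psi))$; here the second derivatives of $\psi$ do not cancel but reassemble, through the same chain rule and cofactor relations, into the $y$-derivatives of $J^tsJ/|\det J|$, reproducing $|\det J|\,\div_y(\cdots)\circ\psi$. In effect both identities are instances of the naturality of $d$, so no separate bookkeeping is needed once the forms picture is fixed.

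The last step, and the only delicate one, is the passage to $\psi\in W^2_3\cap W^1_\infty$. I would argue by a double approximation. Since $\psi,\psi^{-1}\in W^1_\infty$, the map is bi-Lipschitz, so $\det J$ is bounded away from $0$ and $\infty$ and $J^{-1},\ \mathrm{cof}(J),\ 1/\det J$ all belong to $W^1_3\cap L_\infty$; in particular $u\in L_2(\O)$ and the right-hand sides lie in $L_2$, so both formulas are to be understood in the weak sense. First approximate $v$ by smooth fields $v^n$ converging in the graph norms of $\rot_y$ and of the relevant $\div_y$, then approximate $\psi$ by smooth diffeomorphisms $\psi^n\to\psi$ in $W^2_3$ with uniform $W^1_\infty$ bounds, so that $J^n\to J$ and $\det J^n\to\det J$ in $L_3$ and almost everywhere, and, using the uniform bi-Lipschitz bounds, $(\rot v)\circ\psi^n\to(\rot v)\circ\psi$ in $L_2$. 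Writing the smooth identities in weak (integral) form and passing products such as $J^n(v^n\circ\psi^n)$ and $\det J^n\,((J^n)^{-1})^t(\rot v^n)\circ\psi^n$ to the limit by the $L_\infty$-bounds together with almost-everywhere and dominated convergence recovers the weak identities for $u$ and $v$. The main obstacle I anticipate is precisely this limiting argument: one must keep every nonlinear combination of $J$ in a space where convergence is available — the cancellation of the $\dd^2\psi$ terms in the curl relies only on the almost-everywhere equality of mixed partials for $W^2$ functions, but the products of the $L_3$ second derivatives with the $L_2$ field $v$ must be controlled in $L_{6/5}$ by H\"older, in the spirit of the multiplier estimates of Lemma \ref{l32}.
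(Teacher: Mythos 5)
Your treatment of the curl is essentially the paper's: the authors compute $(\rot_x u)_i=\<e_i,[J\n_y,Jv]\>=|\det J|\<J^{-1}e_i,[\n_y,v]\>$, observing that the terms carrying $\dd^2\psi$ cancel by symmetry of the mixed partials --- exactly your pairing of $\epsilon_{ipq}$ against $\dd_p\dd_q\psi_k$ --- with the cofactor identity supplying the factor $|\det J|(J^{-1})^t$. The genuine difference is in the divergence identity and in how the low regularity is handled. The paper never differentiates $J$ there at all: it takes the distributional definition of $\div_x(su)$, tests against $\eta\in C_0^\infty(\O)$, uses the chain rule $\n_x\eta=J\,\n_y(\eta\circ\psi^{-1})$ and the change of variables $dx=|\det J|^{-1}dy$ to arrive at $-\int_{\tilde\O}\<J^ts J v,\n_y\eta\>\,|\det J|^{-1}dy$, and then simply reads off the answer from the weak definition of $\div_y$. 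No Piola identity, no reassembling of second derivatives, and no approximation of $\psi$ or of $v$ is needed; bi-Lipschitz regularity of $\psi$ already suffices for this half.

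Your alternative --- prove the smooth case by direct computation and then pass to the limit by a double approximation --- is workable in outline but heavier, and the one step that does not go through as written is the approximation of $\psi$ itself. You need smooth \emph{diffeomorphisms} $\psi^n\to\psi$ in $W^2_3$ with uniform bi-Lipschitz control, so that $(J^n)^{-1}$, $1/\det J^n$ and the compositions with $(\psi^n)^{-1}$ all converge; but mollifying a bi-Lipschitz map need not produce injective maps, and $\det J^n$ need not stay uniformly away from zero (an average of matrices of determinant one and bounded norm can have vanishing determinant), so this requires a separate argument that you have only asserted. The cleaner repair is the paper's: formulate both identities weakly from the outset, note that for the curl the only derivatives of $\psi$ that enter are the antisymmetrized mixed second partials, which vanish a.e. for $W^2_3$ maps, and for the divergence shift every derivative onto the test function. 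Then at most $v$ needs smoothing, and your concluding H\"older bookkeeping in $L_{6/5}$ becomes unnecessary.
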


\begin{proof}
We have
$$
(\rot_x u)_i=[\n_x,u]_i=
\<e_i,[J\n_y,Jv]\>=|\det J|\<J^{-1}e_i,[\n_y,v]\>
=|\det J| ((J^{-1})^t \rot_y v)_i.
$$
Here in the third equality the derivatives of $J$ cancel out, 
since  
$$
\dd_{x_j} J_{km} v_m - \dd_{x_k} J_{jm} v_m = 
(\dd_j \dd_k \psi_m - \dd_j \dd_k \psi_m) v_m = 0.
$$
Using the definition of the divergence in terms of 
distributions, we have
\begin{eqnarray*}
\intop\limits_\O\div_x(su)\eta dx=-\intop\limits_\O \< su,\n_x\eta\> dx
=-\intop\limits_{\tilde{\O}} \<sJv,J\n_y\eta\>|\det J|^{-1}dy\\
=\intop\limits_{\tilde{\O}}\div_y\left(\dfrac{J^tsJv}{|\det J|}
\right)\eta dy
=\intop\limits_\O|\det J|\div_y\left(\dfrac{J^tsJv}{|\det J|}\right)\eta dx, 
\quad \forall \eta \in C_0^\infty(\O, \R). \quad\qedhere
\end{eqnarray*}
\end{proof}

\begin{theorem} 
\label{t54}
Let $\Omega$, $\tilde\Omega$ be bounded domains in $\R^3$,
$\psi: \Omega \to \tilde\Omega$ be a bijection such that
$
\psi \in W^2_3 \cap W^1_\infty (\Omega)$, 
$\psi^{-1} \in W^2_3 \cap W^1_\infty (\tilde\Omega)$.
The transformation \eqref{52} maps the space $F(\Omega, \er, \tau)$
to the space $F(\tilde\Omega, \tilde\er, \tau)$,
$F(\Omega, \mu, \nu)$ to $F(\tilde\Omega, \tilde\mu, \nu)$,
$W_2^1(\Omega, \tau)$ to $W_2^1(\tilde\Omega, \tau)$,
and $W_2^1(\Omega, \mu, \nu)$ to $W_2^1(\tilde\Omega, \tilde\mu, \nu)$.
Here the matrix-functions
$$
\tilde{\er}(y)=\dfrac{J^t \er(x) J}{|\det J|},\quad 
\tilde{\mu}(y)=\dfrac{J^t \mu(x) J}{|\det J|}
$$ satisfy \eqref{01} and \eqref{04} simultaneously with $\er$ and $\mu$,
$y=\psi(x)$, $J_{jk}(x)= \dd_j \psi_k (x)$.
Moreover, the norm estimates
$$
c_1 \|u\|_{F(\Omega, s)} \le
\|v\|_{F(\tilde\Omega,\tilde s)} \le
c_2 \|u\|_{F(\Omega, s)}, \quad s = \er\ \text{or}\ \mu,
$$
$$
c_1 \|u\|_{W^1_2(\Omega)} \le
\|v\|_{W^1_2(\tilde\Omega)} \le
c_2 \|u\|_{W^1_2(\Omega)}
$$
take place.
Here $c_1$ and $c_2$ depend only on the norms 
of $\psi$ and $\psi^{-1}$ in $W_3^2$ and $W^1_\infty$ .
\end{theorem}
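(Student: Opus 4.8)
The plan is to separate the statement into three independent tasks: (i) the transformed coefficients $\tilde\er,\tilde\mu$ are again admissible, i.e. they satisfy \eqref{01} and \eqref{04}; (ii) the four spaces are carried into one another with equivalent norms; (iii) the tangential and normal boundary conditions are preserved. The transformation formulas of Lemma \ref{l52} reduce (ii) and (iii) to bounded changes of variable, so that the only genuine work is (i) together with the invariance of $W^1_2$.

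First I would settle the coefficients. Symmetry and realness of $\tilde\er=\frac{J^t\er J}{|\det J|}$ (and $\tilde\mu$) are read off the formula, and for $\xi\in\C^3$ the identity $\langle\tilde\er\xi,\xi\rangle=|\det J|^{-1}\langle\er J\xi,J\xi\rangle$ gives the two-sided bound, since $J$, $J^{-1}$ and $\det J$ are bounded above and bounded away from $0$ — this is exactly the content of $\psi,\psi^{-1}\in W^1_\infty$. For \eqref{04} I would show that $x\mapsto\frac{J^t\er J}{|\det J|}$ lies in $W^1_3\cap L_\infty$: here $J=\n\psi\in W^1_3\cap L_\infty$ because $\psi\in W^2_3\cap W^1_\infty$, $\det J$ is a polynomial in the entries of $J$ and is bounded away from $0$, so $|\det J|^{-1}\in W^1_3\cap L_\infty$ as well, and $W^1_3\cap L_\infty$ is a multiplicative algebra by the Leibniz rule. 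Composing with $\psi^{-1}$ and using the chain rule with $\psi^{-1}\in W^2_3\cap W^1_\infty$ then yields $\tilde\er\in W^1_3(\tilde\Omega)$.

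The $F$-statements and the norm equivalence follow directly from Lemma \ref{l52}. Since $|\det J|$ and $J^{-1}$ are bounded above and below, $(\rot_x u)(x)=(|\det J|(J^{-1})^t\rot_y v)(\psi(x))$ shows $\rot_x u\in L_2(\Omega)\Leftrightarrow\rot_y v\in L_2(\tilde\Omega)$ with comparable norms, and $(\div_x(su))(x)=(|\det J|\div_y(\tilde s v))(\psi(x))$ does the same for the divergence; the change of variables $u=J(v\circ\psi)$ gives $u\in L_2(\Omega)\Leftrightarrow v\in L_2(\tilde\Omega)$, and in fact the lower order term is invariant, $(su,u)_{L_2(\Omega)}=\int_{\tilde\Omega}\langle\frac{J^t sJ}{|\det J|}v,v\rangle\,dy=(\tilde s v,v)_{L_2(\tilde\Omega)}$. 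For the $W^1_2$-claim I would differentiate $u=J(v\circ\psi)$: the term $J\,(\n_y v\circ\psi)\,\n\psi$ is in $L_2$ because $J\in L_\infty$, while $(\n J)(v\circ\psi)$ is controlled by Hölder, $\n J\in L_3$ and $v\circ\psi\in L_6$, the latter via the Sobolev embedding $W^1_2(\tilde\Omega)\hookrightarrow L_6(\tilde\Omega)$ in dimension three; the reverse bound uses $\psi^{-1}$ symmetrically.

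For the boundary conditions I would use that \eqref{52} is the pullback of $1$-forms, so the same transformation applied to the test functions preserves the defining integral identities. With $h=J(g\circ\psi)$, Lemma \ref{l52} and a change of variables give $\int_\Omega\langle\rot_x u,h\rangle\,dx=\int_{\tilde\Omega}\langle\rot_y v,g\rangle\,dy$ and $\int_\Omega\langle u,\rot_x h\rangle\,dx=\int_{\tilde\Omega}\langle v,\rot_y g\rangle\,dy$, the matrix product collapsing through $J^t(J^{-1})^t=\1$ (the Hermitian conjugation being absorbed by replacing $h$ with $\bar h$, as $J$ is real); since $g\mapsto h$ is a bijection of $\{h:\rot h\in L_2\}$, Definition \ref{o11} for $u$ is equivalent to that for $v$, so $u_\tau|_{\dd\Omega}=0\Leftrightarrow v_\tau|_{\dd\tilde\Omega}=0$. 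For the normal condition I would take scalar test functions $\varphi=\phi\circ\psi$, whence $\n_x\varphi=J(\n_y\phi\circ\psi)$ and $\int_\Omega\langle\mu u,\n_x\varphi\rangle\,dx=\int_{\tilde\Omega}\langle\tilde\mu v,\n_y\phi\rangle\,dy$ (the matrix $\tilde\mu$ appearing automatically), together with $\int_\Omega\div_x(\mu u)\bar\varphi\,dx=\int_{\tilde\Omega}\div_y(\tilde\mu v)\bar\phi\,dy$; by Definition \ref{o12} this yields $(\mu u)_\nu|_{\dd\Omega}=0\Leftrightarrow(\tilde\mu v)_\nu|_{\dd\tilde\Omega}=0$. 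The main obstacle is the low regularity: the delicate points are the algebra and chain-rule bookkeeping giving $\tilde\er,\tilde\mu\in W^1_3$ and, above all, the estimate $L_3\cdot L_6\subset L_2$ that makes $W^1_2$ invariant, both of which rest on $\psi\in W^2_3\cap W^1_\infty$ rather than on any pointwise bound on second derivatives.
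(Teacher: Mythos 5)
Your proposal is correct and follows essentially the same route as the paper's proof: Lemma \ref{l52} gives the $F$-space correspondence, the boundary conditions are transferred by transforming the test functions ($h\mapsto J^{-1}h$ for the tangential identity, scalar $\varphi\mapsto\varphi\circ\psi^{-1}$ for the normal one), and the $W^1_2$ invariance reduces to the fact that multiplication by $J\in W^1_3\cap L_\infty$ is bounded on $W^1_2$ via the $L_3\cdot L_6\subset L_2$ estimate. You in fact supply more detail than the paper does on the verification that $\tilde\er,\tilde\mu$ satisfy \eqref{01} and \eqref{04}, which the paper asserts without proof.
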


\begin{proof}
Let $u$ and $v$ satisfy the relation \eqref{52}, $u\in F(\O,s)$, $s=\er$ or $\mu$. 
Lemma \ref{l52} implies that $v\in F(\tilde{\O},\tilde{s})$, 
and two-sided estimates take place.

Further, let 
$$
u\in F(\O,\er,\tau), \quad h\in L_2(\O,\C^3), \quad 
\rot h\in L_2(\O,\C^3), \quad \tilde{h}(y) = J(x)^{-1} h(x) .
$$ 
Then
\[
(\rot_y v,\tilde h)_{L_2(\tilde\O)} 
= (\rot_x u,h)_{L_2(\O)} = (u,\rot_x h)_{L_2(\O)}
= (v,\rot_y \tilde{h})_{L_2(\tilde\O)},
\]
Therefore, $\left.v_\tau\right|_{\dd \tilde{\O}}=0$. 
Similarly, if $u\in F(\O,\mu,\nu), \ph\in W^1_2(\O,\C)$, then
\[
(\div_y(\tilde{\mu}v),\ph)_{L_2(\tilde\O)}
=(\div_x(\mu u),\ph)_{L_2(\O)} 
=-(\mu u,\n_x\ph)_{L_2(\O)} 
=-(\tilde{\mu}v,\n_y\ph)_{L_2(\tilde\O)},
\]
which means that $\left.(\tilde{\mu}v)_\nu\right|_{\dd \tilde{\O}}=0$.
Finally, multiplication by $J\in W^1_3\cap L_\infty$ 
is the bounded operator in Sobolev space $W^1_2$.  
\end{proof}

\subsection{Localization}
It is easy to see that multiplication by smooth bounded function 
does not move elements of the space ${\cal D} (\O)$ from this space 
(see e.g. \cite{BS87smzh}).

\begin{lemma}
\label{l53}
Let $\O \subset \R^3$, $X \in {\cal D} (\O)$, $\ze \in C_0^\infty (\R^3)$.
Then
$$
\ze X \in {\cal D} (\O), \qquad 
\|\ze X\|_{{\cal D} (\O)} \le C 
\left(\|\n\ze\|_{L_\infty} + \|\ze\|_{L_\infty}\right) \|X\|_{{\cal D} (\O)} .
$$
\end{lemma}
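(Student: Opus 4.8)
The plan is to treat $X=(E,\ph,H,\eta)$ componentwise: verify that each component multiplied by $\ze$ stays in its space, and then control the graph norm through a commutator computation. First I would dispose of the two scalar components. Since $\ze\in C_0^\infty(\R^3)$, the products $\ze\ph$ and $\ze\eta$ lie in $W_2^1(\O)$ with $\n(\ze\ph)=\ze\n\ph+\ph\n\ze$; and because $\mathring W_2^1(\O)$ is the closure of $C_0^\infty(\O)$ and is stable under multiplication by smooth functions, $\ze\eta\in\mathring W_2^1(\O)$. The two vector components require more care only for the boundary conditions: using the Leibniz rules $\rot(\ze E)=\ze\rot E+\n\ze\times E$ and $\div(\er\ze E)=\ze\div(\er E)+\<\er E,\n\ze\>$ (and likewise for $H$ with $\mu$), together with boundedness of $\er,\mu$, one sees immediately that $\rot(\ze E),\div(\er\ze E),\rot(\ze H),\div(\mu\ze H)\in L_2(\O)$.

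The heart of the argument is that multiplication by $\ze$ preserves the weak boundary conditions of Definitions \ref{o11} and \ref{o12}. For the tangential condition I would take an arbitrary admissible test field $h$ with $\rot h\in L_2$ and observe that $\ze h$ is again admissible, since $\rot(\ze h)=\ze\rot h+\n\ze\times h\in L_2$. Applying the identity $\left.E_\tau\right|_{\dd\O}=0$ with test field $\ze h$ and expanding $\rot(\ze h)$ gives, after using the scalar triple product identity $\<E,\n\ze\times h\>=-\<\n\ze\times E,h\>$ (valid because $\n\ze$ is real), exactly $\int_\O\<\ze E,\rot h\>\,dx=\int_\O\<\rot(\ze E),h\>\,dx$; hence $\left.(\ze E)_\tau\right|_{\dd\O}=0$. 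For the normal condition I would test $\left.(\mu H)_\nu\right|_{\dd\O}=0$ against $\ze\ph\in W_2^1(\O)$ and expand $\n(\ze\ph)=\ze\n\ph+\ph\n\ze$; since $\ze$ is real, the extra term reassembles into $\div(\ze\mu H)=\ze\div(\mu H)+\<\mu H,\n\ze\>$, yielding $\int_\O\<\ze\mu H,\n\ph\>\,dx=-\int_\O\div(\ze\mu H)\,\overline\ph\,dx$ and thus $\left.(\ze\mu H)_\nu\right|_{\dd\O}=0$. This establishes $\ze X\in{\cal D}(\O)$.

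Finally, for the norm estimate I would compute ${\cal L}(\ze X)=\ze\,{\cal L}X+R$, where $R$ collects the terms in which a derivative falls on $\ze$, namely $i\er^{-1}\n\ze\times H+i\eta\n\ze$, $-i\<\mu H,\n\ze\>$, $-i\mu^{-1}\n\ze\times E-i\ph\n\ze$, and $i\<\er E,\n\ze\>$. Each entry of $R$ is a pointwise multiple of a component of $X$ by a factor bounded by $\|\n\ze\|_{L_\infty}$ (using that $\er,\mu$ are bounded above and below), so $\|R\|_{L_2}\le C\|\n\ze\|_{L_\infty}\|X\|_{L_2}$. Combining $\|{\cal L}(\ze X)\|_{L_2}\le\|\ze\|_{L_\infty}\|{\cal L}X\|_{L_2}+C\|\n\ze\|_{L_\infty}\|X\|_{L_2}$ with $\|\ze X\|_{L_2}\le\|\ze\|_{L_\infty}\|X\|_{L_2}$ gives the asserted bound on $\|\ze X\|_{{\cal D}(\O)}$. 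No step is genuinely hard; the only place demanding attention is the preservation of the weak boundary conditions, where one must check that the commutator contributions arising from $\rot(\ze h)$ and $\n(\ze\ph)$ cancel precisely against the commutator terms in $\rot(\ze E)$ and $\div(\ze\mu H)$ — which is exactly what the reality of $\ze$ and the triple product identity guarantee.
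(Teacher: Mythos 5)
Your proof is correct and complete. The paper does not actually prove Lemma \ref{l53} --- it merely remarks that the statement "is easy to see" and refers to \cite{BS87smzh} --- so there is no argument to compare against; what you supply is precisely the standard verification: the Leibniz identities $\rot(\ze E)=\ze\rot E+\n\ze\times E$ and $\div(s\,\ze u)=\ze\div(su)+\<s u,\n\ze\>$, the transfer of the weak boundary conditions of Definitions \ref{o11} and \ref{o12} by testing against $\ze h$ and $\ze\ph$ (where the reality of $\ze$ and the triple-product identity make the commutator terms match), and the commutator bound $\|{\cal L}(\ze X)-\ze{\cal L}X\|_{L_2}\le C\|\n\ze\|_{L_\infty}\|X\|_{L_2}$, with the constant absorbing the bounds $\er_0,\er_1,\mu_0,\mu_1$ from \eqref{01}.
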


{\it Proof of Theorem \ref{t11}.}
Let $\O$ be a bounded domain, $\O \in \mathcal{C}(W_3^2\cap W^1_\infty)$. 
By Definition \ref{o03} the boundary $\dd\O$ can be covered 
by finite number of domains $D_j$, such that for each of them 
there exists a corresponding diffeomorphism $\psi_j$. 
This collection of domains may be completed 
up to covering of the whole domain $\O$. 
Denote the new covering as $ \mathcal{W}_j$ 
and denote the subordinate partition of unity as $\ze_j$:
$$
\label{razb}
\ze_j \in C_0^\infty (\R^n), \ \ 0 \le \ze_j (x) \le 1, \ \ 
\supp \ze_j \subset \mathcal{W}_j, \ \ 
\sum_j \ze_j (x) = 1 \ \text{for} \ x \in \overline\O.
$$
Let 
$$
X = \left( \begin{array}{cc} E \\ \ph \\ 
H \\ \eta\end{array} \right)\in {\cal D} (\O)
\quad \text{and} \quad 
X_j = \ze_j X = \left( \begin{array}{cc} 
E_j \\ \ph_j \\ H_j \\ \eta_j \end{array} \right) .
$$
By Lemma \ref{l53} 
$$
X_j \in {\cal D} (\O), \quad 
 \|X_j\|_{{\cal D} (\O)} \le C \|X\|_{{\cal D} (\O)} \quad \text{and} 
\quad \supp X_j \subset \mathcal{W}_j.
$$
Put
$$
Y_j (y) = \left( \begin{array}{cc}
J_j(x)^{-1} E_j(x)\\ \ph_j(x)\\
J_j(x)^{-1} H_j(x)\\ \eta_j(x) \end{array} \right), 
$$
where the variables $x$ and $y$ are connected by the formula $y = \psi_j (x)$,
$J_j$ is the Jacobi matrix of mapping $\psi_j$.

By virtue of Theorem \ref{t54} 
$Y_j\in \mathcal{D}(\psi_j(\mathcal{W}_j)\cap V_j)$, 
where $V_j$ is the corresponding special Lipschitz domain. 
Since $\supp Y_j\subset \psi_j(\mathcal{W}_j)$, 
the vector $Y_j$ belongs to the space $\mathcal{D}( V_j)$. 
Furthermore, due to Theorem \ref{t52} $Y_j \in {\cal A} (V_j)$.
Applying Theorem \ref{t54} once again, we get 
$X_j \in {\cal A} (\mathcal{W}_j\cap \O)$ and
$$
\|X_j\|_{W_2^1 (\O)} \le C_1 \|Y_j \|_{W_2^1 (V_j)}
\le C_2 \|Y_j \|_{{\cal D} (V_j)} \le C_3 \|X_j\|_{{\cal D} (\O)}.
$$
Hence,
$$
X = \sum_j X_j \in {\cal A} (\O) \quad \text{and} \quad 
\|X\|_{W_2^1 (\O)} \le \sum_j \|X_j\|_{W_2^1 (\O)} 
\le C_4 \sum_j \|X_j\|_{{\cal D} (\O)}  \le C_5 \|X\|_{{\cal D} (\O)} .\quad
\qed
$$

%

\end{document}